\documentclass[12pt]{article}
\usepackage[utf8]{inputenc}
\usepackage[margin=1in]{geometry}
\usepackage{amsmath, amssymb, amsthm}
\usepackage{color}
\usepackage{bbm} 
\usepackage{natbib}
\usepackage{setspace}
\bibliographystyle{apalike}
\usepackage{bm}

\newtheorem{theorem}{Theorem}
\newtheorem{proposition}[theorem]{Proposition}
\newtheorem{lemma}[theorem]{Lemma}

\newtheorem{definition}{Definition}

\newtheorem{assumption}{Assumption}

\newtheorem*{theorem*}{Theorem}
\newtheorem*{proposition*}{Proposition}

\newcommand{\continuation}{??}

\newcommand{\commentout}[1]{}


\newcommand{\uk}{\underline{k}}
\newcommand{\ok}{\bar{k}}
\newcommand{\ux}{\underline{x}}
\newcommand{\ox}{\bar{x}}
\newcommand{\B}{\operatorname{Bin}}

\title{Optimal Decision Mechanisms for Committees: \\Acquitting the Guilty
\thanks{We thank Sarah Auster, Dan Bernhardt, Nina Bobkova, Martin Cripps, Duarte Goncalves, Johannes H\"orner, Philippe Jehiel, Aditya Kuvalekar, Daniel Kr\"ahmer, Stephan Lauermann, Benny Moldovanu,Axel Niemeyer, Justus Preusser, Ludvig Sinander, Vasiliki Skreta, Rani Spiegler as well as seminar participants at Naples, SAET 2023, SEA , SITE 2023, STF Oxford 2022, Toulouse, and UCL. This work is supported by ERC grant HWS - grant agreement No 101098319

}

}
\author{Deniz Kattwinkel \and Alexander Winter}
\date{\today 
}
 
\begin{document}
\onehalfspacing

\maketitle
\begin{abstract}

A group of privately informed agents chooses between two alternatives. How should the decision rule be designed if agents are known to be biased in favor of one of the options? We address this question by considering the Condorcet Jury Setting as a mechanism design problem. Applications include the optimal decision mechanisms for boards of directors, political committees, and trial juries. 

While we allow for any kind of mechanism, the optimal mechanism is a voting mechanism. In the terminology of the trial jury example: When jurors (agents) are more eager to convict than the lawmaker (principal), then the defendant should be convicted if and only  if neither too many nor too few jurors vote to convict. 

This kind of mechanism accords with a judicial procedure from ancient Jewish law.
\end{abstract}

\section{Introduction}
The Talmudic rules governing the ancient Jewish judicial system contain a peculiar detail that has puzzled religious and judicial scholars for centuries: If a defendant in a death penalty case was unanimously found guilty then he was to be acquitted. \footnote{For additional details and background information on this judicial procedure see \cite{bar2021talmudic} and \cite{glatt2013unanimous}.}

The question what sentencing rule should be used in a court relates to a central question of Economics that we study in this paper: how to aggregate knowledge that is dispersed among strategic agents. We derive the optimal decision mechanism in the Condorcet Jury model (\cite{condorcet}), the canonical theoretical model of dispersed information.

Applications include 
politicians having information about the benefits of a reform proposal, board members having knowledge about the benefit of a merger, and jurors having individual views on the guilt of a defendant after observing a trial.

Formal and informal institutions in society serve to  extract dispersed information and to map it into social outcomes. In our examples: parliamentary voting rules, board governance rules and jury decision procedures. 

Often, the preferences of the committee members are different from what would be socially or otherwise desirable.
In the examples, politicians could be more eager to change the status quo than would be the social optimum, board members could be biased towards a merger, and jurors could be more eager to convict the defendant than what would be optimal for society. In our setting these conflicts translate into different thresholds of doubt.\footnote{For instance, in the jury context, the lawmaker might want to convict only after being $90\%$ sure of the defendant's guilt whereas the jurors might want to convict after $50\%$. }

We find that a voting mechanism taking an interval form is optimal. In terms of the jury example: When jurors are more eager to convict than the lawmaker, then the defendant is convicted if and only if neither too many nor to few jurors vote to convict. This anti-unanimity accords with the decision rule described in the Talmud. 

In our model there is a binary social choice to be made and a number of agents privately receive conditionally independent binary signals about the state of the world. A principal can commit to a decision rule which defines a message set for each player and maps the joint messages of the agents into a lottery over choices.\footnote{By the revelation principle, this setting captures any potential mechanism, including dynamic mechanisms, where for example the principal consults the agents sequentially and uses the information he has gained for the interaction with subsequent agents.}

All players' preferences depend on the choice and the state. In one state of the world, decision $A$ is preferred, in the other state of the world decision $B$ is preferred. 
The players' preferences translate into a threshold of doubt: the minimal probability for the A-favouring state that would make the player want to choose $A$. We assume that the voters are homogeneous whereas the principal has a higher threshold of doubt.

We show that the optimal mechanism lets every agent vote and chooses the alternative according to the following decision rule: if the number of votes for $A$ lies in a certain interval, decision $A$ is implemented, otherwise decision $B$. This mechanism incentivizes the strategic agents to vote truthfully according to their signal. 

How can this non-monotone voting rule be the optimal mechanism?

The principal's first-best rule would choose $A$ whenever the number of signals in favour of $A$ induces a posterior belief about the $A$-state which exceeds the principal's threshold of doubt.

This mechanism does not incentivize the agents to vote truthfully: Suppose all other agents vote truthfully and consider an agent who received a signal in favour of $B$. The only case in which his vote is pivotal is when a vote for $A$ from him would just sway the posterior over the principal's threshold of doubt. If the agent's threshold of doubt is sufficiently lower than the principal's, in this situation the voter wants choice $A$ (even taking his own B-favouring signal into account). Since this is the only situation in which his vote makes a difference, he would rather vote untruthfully for $A$.\footnote{This insight is the same as in the non binding voting models:  \cite{morgan2008information}, \cite{levit2011nonbinding}, and \cite{battaglini2017public}}

The optimal mechanism addresses this problem by introducing an additional pivotal event (at the upper bound of the interval). To see how committing to $B$ for the very highest $A$-vote outcome gives the agents the incentive to vote truthfully consider first the agent-optimal rule which selects $A$ whenever the number of signals in favour of $A$ induces a posterior belief that exceeds the agents' threshold of doubt. This mechanism  incentivizes the agents to vote truthfully \citep{mclennan1998consequences}.

Consider the following numerical example: there are 9 agents, the agents' threshold of doubt is reached after 3 votes, the principal's after 5. The probability for an $A$-signal is $2/3$ in the $A$ state and $1/3$ in the $B$-state. 

The agent optimal mechanism would select $A$ after at least $3$ $A$-votes. The principal would like to lower the probability of selecting $A$ after $3$ and $4$ $A$-votes.

As the principal requires more $A$ votes than an agent to favour $A$ over $B$, the relevant incentive constraint is the one that ensures that an agent votes truthfully after a $B$ signal. 

Suppose we start lowering the probability after $3$ votes. This will reduce the incentives to report truthfully after a $B$ signal. There are two effects:

(i) If there are exactly two other agents that vote for $A$ and the agent votes untruthfully for $A$ the probability for selecting $A$ is reduced. So the damage from lying is reduced. (ii) In the other pivotal event, when there are exactly three other agents voting for $A$ and the agent reports truthfully for $B$, the probability of selecting $A$ (what the agent prefers in this situation) is lowered. So the benefit from voting truthfully is reduced. 

To counter this effect, suppose the we would now lower the probability of selecting $A$ after 9 $A$-votes. This reduces the benefit from lying after a $B$-signal since this event could only occur if the agent with a $B$-signal lies. 

When the probability of selecting $A$ after 3 votes is reduced to zero, we can start reducing the probability of selecting $A$ after 4 $A$-votes. This will have the same effect as before and we have to counter it by lowering the probability at the upper end of the interval.

If we lower the probability after 8 $A $-votes there are two countervailing effects on the incentives to vote truthfully after a $B$-signal: (i) If the other 8 agents voted for $A$ and the agent votes truthfully for $B$ the probability of selecting $A$ is decreased. The benefit from truthful voting is reduced. (ii) If only 7 other agents voted for $A$ and the agent would vote untruthfully for $A$, the probability of selecting $A$ would be reduced. The benefit from lying is decreased. 
Intuitively, the reason why the (ii) effect dominates the (i) effect is that in both states the probability of 7 $A$-votes is higher than that of 8 $A$-votes.\footnote{
7 and 8 are higher than the expected number of signals in state $A$ ($2/3*9=6$ ) and state $B$ ($1/3 *9=3$).}

This shows how we can successively reduce probability mass from the margins of the support interval. Starting with the agent-optimal mechanism we trade off the cost of reducing the probability of choosing $A$ after very high $A$-vote outcomes with the benefit of reducing the probability for selecting $A$ after medium $A$-votes outcomes. 

Our proof is based on linear programming and implies that in fact reducing the probability of selecting A at the margins of the intervals dominates any manipulation inside in terms of cost/benefit ratio. 
Theorem \ref{th:ivm} uses these arguments to show that an interval mechanism yields the principal the highest payoff.

 A principal with a very high threshold of doubt is even after a very high number  of $A$-votes not very sure that $A$ is the optimal decision. Therefore, 
 the costs of not choosing A after this high number of $A$-votes is rather small. Whereas the benefit of not choosing $A$ after a medium number of $A$-votes can be quite large for this kind of principal.
 In Theorem \ref{th:non-monotonic} we formalize this idea and show that for high enough thresholds of doubt an optimal mechanism will be non-monotonic.

\section{Related Literature}

The model we study in this paper can be traced back to the 
\cite{condorcet}. Condorcet studied the performance of a simple majority rule under the assumption of sincere voting. \cite{austen1996information} and \cite{feddersen1998convicting} analyzed this model in the presence of incentives and pointed out that sincere voting is often not an equilibrium. \cite{feddersen1998convicting} showed that requiring jurors to be unanimous can asymptotically increase rather than decrease the probability of convicting an innocent defendant. \cite{bouton2018get} argue that the unanimity rule is also inferior to a majority rule with veto power when agents sometimes have a private preference against one of the two options. \cite{chwe2010anonymous} shows that adding conflicts of interest between the agents can imply that non-monotonic voting rules are optimal. \cite{ali2019should} study the question whether a principal can benefit from banning deliberations between agents and find that doing so can be helpful if the voting rule is non-monotonic.

Papers that study the design of collective decision rules in related settings include \cite{wolinsky2002eliciting}, \cite{gershkov2017optimal}, \cite{chwe1999minority} and
        \cite{gershkov2009optimal}.

Details and additional background information on the judicial procedure described in the Talmud can be found in \cite{bar2021talmudic} and \cite{glatt2013unanimous}. See also \cite{levy2021maximum} and \cite{gunn2016too}.

\section{Model}
We consider the Condorcet Jury Setting\footnote{See for example \cite{feddersen1998convicting}.} with $n+1\ge 2$ agents $j$ and a principal $P$. 
Payoffs depend on a binary decision -- $A$ or $B$ -- and an unobserved state of the world $\omega \in \{\alpha,\beta\}$. In state $\omega$, alternative $A$ leads to payoff $V(\omega)$ to the principal and $U(\omega)$ to each agent. Payoffs from alternative $B$ are without loss of generality normalized to zero. Everyone strictly prefers alternative $A$ in state $\alpha$ and alternative $B$ in state $\beta$:
\begin{equation*}
    V(\alpha) > 0 > V(\beta),\quad U(\alpha) > 0 > U(\beta).
\end{equation*}
State $\alpha$ is realized with probability $\Pr(\alpha)\in(0,1)$. Agents receive conditionally independent private signals $s_j \in \{a,b\}$ about the state. In state $\omega$, the probability that an agent observes an $a$-signal is $p_\omega = \Pr(s_j=a|\omega) \in (0,1)$. Signals are informative in the sense that $a$-signals are more probable when the state is $\alpha$,\footnote{The direction of this inequality is without loss since signals can always be relabeled.}
\begin{equation*}\label{eq:informative}
    p_\alpha>p_\beta.
\end{equation*} 
Let $L(k)$ be the relative likelihood of state $\alpha$ when $k$ out of $n+1$ agents receive an $a$-signal:
\begin{equation*}
    L(k)= \frac{\Pr(\alpha)}{\Pr(\beta)}\left(\frac{p_\alpha}{p_\beta}\right)^k \left(\frac{1-p_\alpha}{1-p_\beta}\right)^{n+1-k}.
\end{equation*}
Since an additional $a$-signal is evidence in favor of state $\alpha$, $L(\cdot)$ is strictly increasing. If signals were public then the principal would weakly prefer decision $A$ if and only if 
\begin{equation*}
    L(k) \ge -\frac{V(\beta)}{V(\alpha)} =: t_P.
\end{equation*}
The value $t_P$ is called the principal's \textit{threshold of doubt}. Similarly, the threshold of doubt of the agents is given by
\begin{equation*}
    t_J = -\frac{U(\beta)}{U(\alpha)}.
\end{equation*}
We make the following additional assumptions: 
\begin{assumption}\label{as:ordering}
Agents are more partial toward $A$ than the principal: 
$    t_P\ge t_J.$
\end{assumption}
\begin{assumption}\label{as:nopartisans}
There are no partisans:
$
    L(0) < t_J, t_P < L(n+1).
$
\end{assumption}
\begin{assumption}\label{as:noindiff}
There are no indifferences:
$
    L^{-1}(t_J), L^{-1}(t_P) \in\mathbb{R}\setminus\mathbb{Z}.\footnote{When convenient we will interpret $L(\cdot)$ as a function $\mathbb R \to \mathbb R$.}
$\end{assumption}
Assumption \ref{as:ordering} says that the agents require a lesser level of confidence that the state is $\alpha$ in order to prefer alternative $A$ than the principal. This assumption is without loss of generality since the inequality could be reversed by relabeling the alternatives. Assumption \ref{as:nopartisans} says that conditioning on the realized number of $a$-signals, if no agent receives an $a$-signal then everyone strictly prefers alternative $A$, whereas if all agents receive an $a$-signals then everyone strictly prefers alternative $B$. This ensures that signals are informative enough to affect the preferences of all parties. Assumption \ref{as:noindiff} says that no realized number of $a$-signals would make the principal or the agents exactly indifferent between $A$ and $B$. This is a technical assumption made for mathematical convenience.

The principal designs and commits to a mechanism. Our main result characterizes the \textit{optimal} mechanism that maximizes the principal's expected payoff.

\section{Voting mechanisms}
The principal could design any static or sequential negotiation scheme: We allow for any kind of mechanism. Our first result is that no mechanism can make the principal better off than a simple class of mechanisms often encountered in practice: \textit{Voting mechanisms}. In a voting mechanism, each agent votes for one of the two alternatives and an alternative is chosen based on the vote tally. The result is not a direct consequence of the revelation principle but builds on the underlying symmetry of the setting. Recall that a direct mechanism is a mapping $\{a, b\}^{n+1}\to[0,1]$ that assigns to each type-profile a probability that $A$ is chosen. The revelation principle implies that it is without loss to restrict attention to Bayesian incentive-compatible (IC) direct mechanisms.\footnote{A direct mechanism is said to be IC if truthful signal-reporting is a Bayes-Nash equilibrium of the induced game.}
\begin{definition}
A direct mechanism that depends only on the number of reported $a$-signals is called a \textit{voting mechanism}. 
\end{definition}
Voting mechanisms can be written as $x:\{0,\cdots, n+1\}\to [0,1]$. An agent who reports an $a$-signal in a voting mechanism is said to \textit{vote} for $A$.\footnote{We use this terminology since an additional $a$-signal is evidence in favor of state $\alpha$ and all parties prefer decision $A$ in state $\alpha$}. If $k$ agents vote for $A$ then $A$ will be chosen with probability $x(k)$.
\begin{lemma}\label{la:votingmechs}
For any IC direct mechanism there exists an IC voting mechanism that leads to the same expected payoff for the principal. 
\end{lemma}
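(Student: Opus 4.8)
The plan is to \emph{symmetrize} an arbitrary IC direct mechanism over permutations of the agents. The construction works precisely because the primitive environment is invariant under relabeling the agents: conditional on the state, signals are i.i.d.\ across agents, and every agent has the same utility function $U$. Fix an IC direct mechanism $\chi:\{a,b\}^{n+1}\to[0,1]$. For a permutation $\sigma$ of $\{1,\dots,n+1\}$ let $\chi^\sigma(s_1,\dots,s_{n+1}):=\chi(s_{\sigma(1)},\dots,s_{\sigma(n+1)})$ and set $\bar\chi:=\tfrac{1}{(n+1)!}\sum_\sigma\chi^\sigma$. Two type profiles with the same number of $a$'s differ by a permutation, and the sum defining $\bar\chi$ is invariant under precomposition with any permutation; hence $\bar\chi$ depends only on the number of reported $a$'s and is therefore a voting mechanism $x:\{0,\dots,n+1\}\to[0,1]$.

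Next I would check that the principal's payoff is unchanged. Under any direct mechanism $m$, the principal's expected payoff (recall the $B$-payoff is normalized to zero) equals $\Pr(\alpha)V(\alpha)\,\mathbb E[m(s)\mid\alpha]+\Pr(\beta)V(\beta)\,\mathbb E[m(s)\mid\beta]$, so it is a linear functional of $m$ through the two state-conditional expectations $\mathbb E[m(s)\mid\omega]$. Conditional on $\omega$ the profile $(s_1,\dots,s_{n+1})$ is exchangeable, so $\mathbb E[\chi^\sigma(s)\mid\omega]=\mathbb E[\chi(s)\mid\omega]$ for every $\sigma$, whence $\mathbb E[\bar\chi(s)\mid\omega]=\mathbb E[\chi(s)\mid\omega]$ for $\omega\in\{\alpha,\beta\}$. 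Thus $\bar\chi$ and $\chi$ give the principal the same expected payoff; in fact they induce the same joint distribution over (state, decision).

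It remains to show $\bar\chi$ is IC. First, each $\chi^\sigma$ is IC: relabeling the agents by $\sigma$ is an isomorphism between the game induced by $\chi$ and the game induced by $\chi^\sigma$ that carries the truthful profile to the truthful profile, so truthful reporting is a Bayes--Nash equilibrium of one game if and only if it is of the other. Concretely, agent $j$'s truth-telling constraint under $\chi^\sigma$ coincides, term by term, with agent $\sigma^{-1}(j)$'s constraint under $\chi$ once one uses that $\Pr(\omega\mid s_j)$ and $U(\omega)$ do not depend on the identity of the agent. Second, for the (common) truthful profile, each agent's incentive constraint --- ``reporting $s_j$ is at least as good as reporting $\hat s_j$ when the others report truthfully'' --- is an inequality between two linear functionals of the mechanism, hence is preserved under convex combinations. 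Since $\bar\chi$ is an average of the IC mechanisms $\chi^\sigma$, it is IC, which completes the proof.

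The step that needs the most care is the first half of the last paragraph --- arguing that $\chi^\sigma$ is IC --- because that is where the structural symmetry of the model (conditionally i.i.d.\ signals and identical agent preferences) is genuinely used; the payoff identity and the convexity argument are then bookkeeping with linearity and exchangeability.
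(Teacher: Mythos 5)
Your proposal is correct and follows essentially the same route as the paper: average the mechanism over all permutations of the agents, note that each permuted mechanism is IC and payoff-equivalent because agents are ex-ante identical and signals are exchangeable conditional on the state, and use linearity/convexity of the IC constraints to conclude the symmetrized (anonymous) mechanism is an IC voting mechanism. Your added detail on why each $\chi^\sigma$ is IC just makes explicit a step the paper states more briefly.
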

\begin{proposition}
There is an optimal mechanism that is a voting mechanism.
\end{proposition}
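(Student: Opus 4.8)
The plan is to deduce the statement from Lemma~\ref{la:votingmechs} together with a standard compactness argument, so that essentially no new combinatorics is needed. By the revelation principle the principal's value across all conceivable (static or dynamic) mechanisms and all of their equilibria equals $\sup_{d}\Pi(d)$, the supremum of her expected payoff over all IC direct mechanisms $d:\{a,b\}^{n+1}\to[0,1]$, where $\Pi$ denotes the principal's expected payoff. Lemma~\ref{la:votingmechs} says every such $d$ is payoff-equivalent, for the principal, to some IC voting mechanism; hence $\sup_{d}\Pi(d)=\sup_{x\in\mathcal X}\Pi(x)$, where $\mathcal X$ is the set of IC voting mechanisms $x:\{0,\dots,n+1\}\to[0,1]$. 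It therefore suffices to show that this last supremum is attained.

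For that I would argue that $\mathcal X$ is a nonempty compact polytope in $\mathbb R^{n+2}$. Nonemptiness is immediate: any constant rule $x\equiv c$ ignores the reports, so reporting truthfully is trivially a best response and $x\equiv c\in\mathcal X$. For compactness, note first that $\mathcal X\subseteq[0,1]^{n+2}$ is bounded. The incentive constraints are linear in $x$: fixing the other $n$ agents at truthful reporting, the probability that exactly $j$ of them vote for $A$ in state $\omega$ is a constant $\B(j;n,p_\omega)$, so an agent with signal $s$ obtains $\sum_{\omega}\Pr(\omega\mid s)\,U(\omega)\sum_{j=0}^{n}\B(j;n,p_\omega)\,x(j+1)$ from reporting $a$ and the same expression with $x(j+1)$ replaced by $x(j)$ from reporting $b$; both are linear functionals of $x$, so ``type $s$ weakly prefers to report $s$ truthfully'' is a linear inequality, and by symmetry there are only two such inequalities (for $s=a$ and $s=b$). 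Thus $\mathcal X$ is the intersection of the cube $[0,1]^{n+2}$ with finitely many closed half-spaces, hence closed and bounded, hence compact.

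Finally, the principal's payoff $\Pi(x)=\sum_{k=0}^{n+1}\big(\Pr(\alpha)\B(k;n+1,p_\alpha)V(\alpha)+\Pr(\beta)\B(k;n+1,p_\beta)V(\beta)\big)x(k)$ is a linear, hence continuous, function of $x$, and a continuous function on a nonempty compact set attains its maximum. So some $x^\star\in\mathcal X$ maximizes $\Pi$ over $\mathcal X$, and by the first paragraph $x^\star$ attains the principal's value over all mechanisms; it is by construction a voting mechanism.

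I do not expect a serious obstacle here, since the real work lives in Lemma~\ref{la:votingmechs}; the only point requiring care is the reduction itself. One should be sure that ``IC'' for a voting mechanism is exactly the right object — namely that we need only truthfulness to be \emph{a} Bayes--Nash equilibrium, so that single-agent deviations against a truthful profile are the only constraints and the feasible set is a finite linear system — and that the chain revelation principle $\to$ Lemma~\ref{la:votingmechs} $\to$ compactness loses neither payoff nor feasibility. Once that bookkeeping is pinned down, attainment of the maximum is routine.
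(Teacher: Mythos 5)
Your argument is correct and follows the paper's route: the Proposition is meant to be an immediate consequence of Lemma~\ref{la:votingmechs}, with attainment of the optimum being routine because the principal's problem over IC voting mechanisms is a finite-dimensional linear program over a nonempty compact polytope (exactly the formulation in Appendix~\ref{apx:LP}). Your only addition is spelling out the compactness/attainment bookkeeping that the paper leaves implicit, which is fine.
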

Since voting mechanisms are anonymous, IC reduces to two constraints: agents who observe an $a$-signal must find it worthwhile to vote for $A$ and agents who observe a $b$-signal must find it worthwhile to vote for $B$. Therefore the principal's problem reads:
\begin{align}
    \max\,\, &E[V(\tilde\omega) x(\tilde k)]\notag\\ \label{eq:IC-a}\tag{IC-a}
    \mbox{s.t.}\,\, &E[U(\tilde\omega) x(\tilde k)|a] \ge E[U(\tilde\omega) x(\tilde k-1)|a]\\ \label{eq:IC-b}\tag{IC-b}
    &E[U(\tilde\omega) x(\tilde k)|b] \ge E[U(\tilde\omega) x(\tilde k+1)|b],
\end{align}
where $\tilde k$ is the total number of $a$-signals among the agents.\footnote{Tildes indicate unrealized random variables.}

\section{First-best}
If signals were observable then the principal would choose $A$ over $B$ if and only if $L(k) > t_P$ holds for the realized number of $a$-signals. Let
\begin{equation*}
    k_P = \min\{k \in \{0,\dots,n+1\}: L(k) > t_P\}
\end{equation*}
be the minimal number of $a$-signals such that the principal would prefer $A$ over $B$. If incentives were not an issue then the principal would simply choose the \textit{first-best} or \textit{principal-preferred mechanism} 
\begin{equation}\label{eq:firstbestmech}
    x_P(k) = \begin{cases}
    0,& k< k_P\\
    1,& k\ge k_P.
    \end{cases}
\end{equation}
The number $k_P$ is called the \textit{principal-preferred cutoff}. $k_J$ and $x_J$ are defined identically, with $t_P$ replaced by $t_J$. We refer to $k_J$ and $x_J$ as the \textit{agent-preferred} cutoff and mechanism, respectively.
Assumptions \ref{as:ordering}, \ref{as:nopartisans} and \ref{as:noindiff} imply that 
\begin{equation}\label{eq:cutoffordering}
    0<k_J \le k_P \le n+1.
\end{equation}
If the agent- and principal-preferred cutoffs coincide then the principal's preferred decision coincides with the preferred decision given any signal profile. A \textit{conflict of interest} is said to exist if agents and principal disagree over their preferred decision given some signal profile, or, equivalently, if $k_J < k_P$.
\begin{definition}
There is said to be a conflict of interest if 
\begin{equation}\label{eq:conflict}
    k_J < k_P.
\end{equation}
\end{definition}
Suppose the agent-preferred mechanism is employed and some agent $j$ contemplates whether to vote for $A$ or $B$. If all other agents vote according to their signals then $j$ can influence the final decision if and only if exactly $k_J-1$ out of the $n$ other agents have received an $a$-signal (in this case $j$'s vote is said to be \textit{pivotal}). If $j$ has received an $a$-signal then there are exactly $k_J$ $a$-signals in total and to $j$ would strictly prefer to vote for $A$ (by definition of $k_J$). Conversely, if $j$ has received a $b$-signal then there are exactly $k_J-1$ $a$-signals in total and $j$ would strictly prefer to vote for $B$. Thus we have shown:\footnote{The fact that $x_J$ is IC can also be seen via a \cite{mclennan1998consequences}-type argument.}
\begin{lemma}\label{la:xJstrictIC}
The agent-preferred mechanism $x_J$ is strictly IC.
\end{lemma}
In particular, if there is no conflict of interest then $x_P$ is IC and achieves first-best. If, instead, there is a conflict of interest and the principal commits to $x_P$, then $j$'s vote is pivotal exactly when $k_P-1$ out of the $n$ other agents have received an $a$-signal. If $j$ has received an $a$-signal then he still strictly prefers to vote for $A$. But if he has received a $b$-signal then he now also strictly prefers to vote for $A$. Being pivotal after a $b$-signal means that exactly $k_P-1$ agents in total have received an $a$-signal. This  that $j$ would strictly prefer $A$. Since $k_P-1 \ge k_J$, he strictly prefers to vote for$A$. Hence, $x_P$ satisfies \ref{eq:IC-a}, but violates \ref{eq:IC-b} when there is a conflict of interest, implying that first-best cannot be achieved in this case. Together with Lemma \ref{la:xJstrictIC} this implies:
\begin{proposition}\label{prp:first-best}
First-best can be achieved if and only if there is no conflict of interest.
\end{proposition}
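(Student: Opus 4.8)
The plan is to prove the two implications separately, essentially assembling the pieces already laid out in the text. For the ``if'' direction, suppose there is no conflict of interest, so $k_J = k_P$ and hence the first-best mechanism $x_P$ coincides with the agent-preferred mechanism $x_J$. By Lemma~\ref{la:xJstrictIC} the mechanism $x_J$ is (strictly) IC, so $x_P$ is a feasible IC mechanism; and since $x_P$ is by construction the mechanism a principal unconstrained by incentives would pick, it attains the first-best payoff. Thus first-best is achieved.

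For the ``only if'' direction the key step is a uniqueness observation: the first-best payoff is attained by \emph{only} the mechanism $x_P$. Writing the objective as a linear functional,
\[
E[V(\tilde\omega)x(\tilde k)] = \sum_{k=0}^{n+1} x(k)\,c_k, \qquad c_k = \Pr(\beta)\,\Pr(\tilde k = k\mid\beta)\,V(\alpha)\,\bigl(L(k)-t_P\bigr),
\]
each coefficient $c_k$ is strictly positive when $L(k) > t_P$ and strictly negative when $L(k) < t_P$, while Assumption~\ref{as:noindiff} rules out $L(k) = t_P$ for integer $k$. Hence the unique maximizer over $x:\{0,\dots,n+1\}\to[0,1]$ sets $x(k)=1$ exactly when $L(k) > t_P$, i.e. it is $x_P$. (The same computation with the mass $c_k$ split evenly across the $\binom{n+1}{k}$ signal profiles with $k$ $a$-signals shows $x_P$ is likewise the unique first-best \emph{general} direct mechanism; alternatively, invoke Lemma~\ref{la:votingmechs} to reduce to voting mechanisms.) Now if there is a conflict of interest, $k_J < k_P$, so $k_P - 1 \ge k_J$ and $L(k_P-1) > t_J$; the pivotal-voter argument already given in the text shows that under $x_P$ an agent with a $b$-signal, facing sincere play by the others, is pivotal only when exactly $k_P-1$ of the $n$ others received an $a$-signal, and in that event strictly prefers $A$. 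Therefore $x_P$ violates \eqref{eq:IC-b} and is not IC. Since $x_P$ is the only mechanism delivering the first-best payoff, no IC mechanism does, so first-best cannot be achieved.

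I expect the only genuinely delicate point to be the uniqueness claim in the second direction; everything else is bookkeeping that has already been carried out above. The uniqueness rests exactly on Assumption~\ref{as:noindiff}: without it there could be a coordinate $k$ with $c_k = 0$, at which one could freely perturb $x_P$ without changing the payoff, potentially restoring incentive compatibility. I would therefore make explicit that the no-indifference assumption is precisely what forecloses this loophole, and state the sign analysis of the $c_k$ carefully.
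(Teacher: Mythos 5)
Your proof is correct and follows essentially the same route as the paper: the ``if'' direction via $x_P=x_J$ and Lemma~\ref{la:xJstrictIC}, and the ``only if'' direction by combining the pivotal-voter argument showing $x_P$ violates \eqref{eq:IC-b} with the uniqueness of $x_P$ as the first-best mechanism under Assumption~\ref{as:noindiff}. Your explicit sign analysis of the coefficients $c_k$ merely spells out the uniqueness step that the paper asserts more tersely, so there is no substantive difference.
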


\section{Optimal mechanisms}
The previous section shows that first-best cannot be achieved when there is a conflict of interest. How should the principal design the mechanism in the latter case?

The majority of the existing literature studies voting mechanisms like $x_P$ and $x_J$ which are characterized by a single cutoff such that decision $A$ is implemented if and only if the number of $A$ votes exceeds this cutoff. Our main result shows that such single-cutoff mechanisms are in general non-optimal. Instead, optimal mechanisms are in general non-monotonic in the number of $A$-votes and take a specific form featuring two cutoffs rather than one. We call mechanisms of this type \textit{interval mechanisms}. Theorem \ref{th:ivm} below establishes that it is always optimal for the principal to employ an interval mechanism. Theorem \ref{th:non-monotonic} then shows that if the conflict of interest is large then an optimal interval mechanism must be non-monotonic unless it is constant.
\begin{definition}\label{def:intervalmechs}
A voting mechanism $x$ is said to be an interval mechanism if $x\equiv 0$ or there exist cutoffs $\uk\le\ok\in\{0,\cdots,n+1\}$ such that $\{k:x(k)>0\}=\{\uk,\dots,\ok\}$ and
\begin{equation*}
    x(k) =
    \begin{cases}1,&\uk<k<\ok \\
    0, & k< \uk \mbox{ or } k>\ok.
    \end{cases}
\end{equation*}
\end{definition}
A set of consecutive integers is called an \textit{interval} of integers. The \textit{boundary} of an interval of integers consists of its highest and lowest element and its \textit{interior} consists of all non-boundary elements. Given an interval mechanism $x$ as in Definition \ref{def:intervalmechs}, $\{\uk,\dots,\ok\}$ is called the \textit{implementation interval} 
and to $\uk$ and $\ok$, respectively, as the lower and upper \textit{implementation boundaries}. 
\begin{theorem}\label{th:ivm}
The principal's problem is solved by an interval mechanism.
\end{theorem}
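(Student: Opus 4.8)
The plan is to treat the principal's problem as a linear program and exploit its structure via duality. Write $\pi_\omega(k)=\binom{n+1}{k}p_\omega^k(1-p_\omega)^{n+1-k}$ and $\rho_\omega(m)=\binom{n}{m}p_\omega^m(1-p_\omega)^{n-m}$. The objective equals $\sum_k c_P(k)x(k)$ with $c_P(k)=\Pr(\alpha)\pi_\alpha(k)V(\alpha)+\Pr(\beta)\pi_\beta(k)V(\beta)$, whose sign equals that of $L(k)-t_P$; by Assumption \ref{as:noindiff} this is nonzero at every integer, so $c_P(k)>0\iff k\ge k_P$. Conditioning on the pivotal agent's signal turns \eqref{eq:IC-a} and \eqref{eq:IC-b} into $\sum_k u_a(k)x(k)\ge0$ and $\sum_k u_b(k)x(k)\ge0$, where the $u$'s are telescoping differences of $d_a(m),d_b(m)$ --- the state-weighted continuation values of a pivotal agent holding an $a$- resp.\ $b$-signal --- with $d_a(m)>0\iff m\ge k_J-1$ and $d_b(m)>0\iff m\ge k_J$; both $d_a,d_b$ are linearly independent elements of $\operatorname{span}\{\rho_\alpha,\rho_\beta\}$. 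The feasible set is a nonempty compact polytope (it contains $x_J$ by Lemma \ref{la:xJstrictIC}), so an optimal $x^*$ exists. If there is no conflict of interest then $x_P$ is feasible, optimal and an interval mechanism, so assume $k_J<k_P$. Dropping \eqref{eq:IC-b} leaves a problem whose unique solution is the box-maximizer $x_P$, which satisfies \eqref{eq:IC-a} but violates \eqref{eq:IC-b}; hence \eqref{eq:IC-b} binds at every optimum.

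I would next pass to the relaxation $(R)$ obtained by dropping \eqref{eq:IC-a}, with value $v_R\ge v$, and show that every solution of $(R)$ is an interval mechanism. By LP duality an optimal $x_R$ comes with a multiplier $\mu>0$ (again $>0$ because dropping \eqref{eq:IC-b} forces the $(R)$-infeasible $x_P$) such that $x_R(k)=1$ on $\{\phi_R>0\}$ and $x_R(k)=0$ on $\{\phi_R<0\}$, where $\phi_R(k)=c_P(k)+\mu\,u_b(k)$. The key computation --- using Pascal's identity together with $\pi_\omega(k)=p_\omega(1-p_\omega)\binom{n+1}{k}p_\omega^{k-1}(1-p_\omega)^{n-k}$ and $\rho_\omega(k)-\rho_\omega(k-1)=\tfrac{(n+1)p_\omega-k}{n+1}\binom{n+1}{k}p_\omega^{k-1}(1-p_\omega)^{n-k}$ (conventions $\rho_\omega(-1)=\rho_\omega(n+1)=0$) --- gives $\phi_R(k)=\tfrac{1}{n+1}\binom{n+1}{k}\psi_R(k)$ with $\psi_R(k)=\gamma_\alpha^k(a_\alpha+b_\alpha k)+\gamma_\beta^k(a_\beta+b_\beta k)$, $\gamma_\omega=p_\omega/(1-p_\omega)$, $\gamma_\alpha>\gamma_\beta>0$, and $b_\alpha<0<b_\beta$ directly from $\mu>0$ (the $\rho_\alpha$- and $\rho_\beta$-parts of $\mu u_b$ enter with opposite signs). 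Dividing by $\gamma_\beta^k>0$ yields a function whose second derivative has the sign of a decreasing affine function of $k$, hence which is single-peaked; therefore $\{\psi_R>0\}$ and $\{\psi_R\ge0\}$ are intervals of integers, differing only in at most one endpoint on each side. Complementary slackness squeezes $\{x_R=1\}$ between these two intervals, so $\operatorname{supp}x_R$ is an interval on whose interior $x_R\equiv1$: $x_R$ is an interval mechanism.

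The remaining task is to obtain an interval-mechanism solution of the \emph{original} problem. If $v=v_R$ we are done: then $x^*$ is $(R)$-feasible with value $v_R$, hence a solution of $(R)$, hence an interval mechanism by the previous paragraph. So everything reduces to showing that the relaxation is tight, i.e.\ that the interval mechanism $x_R$ of the previous paragraph also satisfies \eqref{eq:IC-a} (this would force $v\ge v_R$, hence $v=v_R$). Writing $x_R$ as the interval mechanism on $\{\uk,\dots,\ok\}$, the binding of \eqref{eq:IC-b} gives $d_b(\uk-1)=d_b(\ok)$, which forces $\rho_\alpha(\ok)-\rho_\alpha(\uk-1)$ and $\rho_\beta(\ok)-\rho_\beta(\uk-1)$ to share a sign; and since $d_a-\gamma_\beta d_b$ is a positive multiple of $\rho_\alpha$, a short computation reduces \eqref{eq:IC-a} for $x_R$ to the single inequality $\rho_\omega(\uk-1)\ge\rho_\omega(\ok)$ (equivalent for $\omega=\alpha,\beta$).

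The main obstacle is precisely this inequality at the $(R)$-optimal, \eqref{eq:IC-b}-binding interval. I would attack it by an exchange argument on the endpoints: assuming $\rho_\omega(\uk-1)<\rho_\omega(\ok)$, a one-step shift or shrinkage of $\{\uk,\dots,\ok\}$ (say replacing it by $\{\uk+1,\dots,\ok+1\}$ or $\{\uk,\dots,\ok-1\}$) keeps \eqref{eq:IC-b} feasible and strictly raises $\sum_k c_P(k)x(k)$, contradicting optimality of $x_R$ in $(R)$; the feasibility and improvement claims hinge on locating $\uk-1$ and $\ok$ relative to $k_J$ and $k_P$ using the sign patterns of $d_b$ and $c_P$. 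I expect this endpoint exchange, together with the attendant off-by-one bookkeeping, to be the most delicate step; the rest of the argument is the routine LP/duality setup and the explicit sign analysis of $\psi_R$ described above.
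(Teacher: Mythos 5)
Your overall architecture is the same as the paper's: drop \ref{eq:IC-a}, attach a multiplier $\mu>0$ to \ref{eq:IC-b}, show the resulting virtual objective changes sign at most twice (negative--positive--negative) so that every solution of the relaxed problem is an interval mechanism, and then close the loop by verifying \ref{eq:IC-a}. Your sign analysis of $\psi_R(k)=\gamma_\alpha^k(a_\alpha+b_\alpha k)+\gamma_\beta^k(a_\beta+b_\beta k)$ is a legitimate alternative to the paper's exponential-versus-hyperbola lemma (Lemma \ref{la:exp+hyper}); one small repair is needed, since ``second derivative changes sign once'' does not by itself give single-peakedness --- you also need that $\psi_R(k)/\gamma_\beta^k\to-\infty$ as $k\to\pm\infty$ (which does follow from $b_\alpha<0<b_\beta$ and $\gamma_\alpha/\gamma_\beta>1$). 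Likewise the identity $d_a-\gamma_\beta d_b=c\,\rho_\alpha$ with $c>0$ is correct and is a neat substitute for the paper's ratio function $w(k)=\frac{L(k+1)-t_J}{L(k)-t_J}$ (Lemma \ref{la:w}).

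The genuine gap is the closing step, in two respects. First, your reduction of \ref{eq:IC-a} to $\rho_\omega(\uk-1)\ge\rho_\omega(\ok)$ treats the relaxed optimum as the pure $0$--$1$ indicator of $\{\uk,\dots,\ok\}$; but since \ref{eq:IC-b} must bind at any relaxed optimum and $d_b(\uk-1)=d_b(\ok)$ is a knife-edge coincidence, generically one endpoint is fractional, so the relevant inequality is a convex-combination version such as $\ux\,\rho_\alpha(\uk-1)+(1-\ux)\rho_\alpha(\uk)\ge\rho_\alpha(\ok)$, and one needs control over \emph{which} endpoint is fractional and where $\uk$ sits relative to $k_J$ --- the paper obtains exactly this via an extreme-point argument (Lemma \ref{la:extreme} and Step 5 of Lemma \ref{la:relaxed-interval}) and uses those boundary conditions case by case in Lemma \ref{la:interval-IC-a}. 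Second, and more importantly, the decisive inequality is never proved: you defer it to an endpoint-exchange argument, and the exchanges you name do not work as described. Deleting $\ok$ strictly \emph{lowers} the objective because $c_P(\ok)>0$ (as $\ok\ge k_P$), so it cannot contradict optimality; shifting the interval to $\{\uk+1,\dots,\ok+1\}$ is unavailable when $\ok=n+1$ (a case that actually arises, e.g.\ when the optimum is the mechanism $\hat x_J$ of the non-monotonicity proof), and when it is available you have not shown the shifted mechanism satisfies \ref{eq:IC-b}, which is not obvious. The paper avoids optimality-based exchanges altogether at this point: Lemma \ref{la:interval-IC-a} shows directly that \emph{any} interval mechanism with \ref{eq:IC-b} binding and the Step-5 boundary structure satisfies \ref{eq:IC-a}, using the sign and monotonicity of $w$. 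As it stands, the step that makes the relaxation tight --- the heart of the theorem --- is missing from your argument.
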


The proof of Theorem \ref{th:ivm} in Appendix \ref{apx:ivmproof} is based on a simple idea.
The agents have a lower threshold of doubt than the principal and are therefore ex-ante more partial toward $A$. Intuitively, the principal should therefore have no difficulty incentivizing agents to truthfully report ``optimistic'' signals, which would increase the principal's willingness to choose $A$. This suggests that the principal can ignore \ref{eq:IC-a} when designing the optimal mechanism. The resulting relaxed problem reads:
\begin{equation}\label{eq:relaxedproblem}
    \begin{aligned}
    \max\,\, &E[V(\tilde\omega) x(\tilde k)]\notag\\
    \mbox{s.t.}\,\, &E[U(\tilde\omega) x(\tilde k)|b] \ge E[U(\tilde\omega) x(\tilde k+1)|b],
\end{aligned}\tag{R}
\end{equation}
We then transform this problem into a linear program and apply Lagrangian relaxation techniques to merge \ref{eq:IC-b} with the principal's objective function. The key step is showing that this new ``virtual utility'' function crosses zero at most twice: first from below and then from above. This establishes that virtual utility is maximized by an interval mechanism. Finally, we verify that this interval mechanism satisfies \ref{eq:IC-a} and therefore also solves the original problem. 

\begin{definition}
Let $x$ be a voting mechanism. $x$ is said to be monotone if $x(k') \le x(k'')$ whenever $k' \le k''$. $x$ is said to be responsive if there exist $k'$ and $k''$ such that $x(k') \ne x(k'')$.
\end{definition}
The next result shows that a responsive optimal interval mechanism cannot be monotonic if the conflict of interest is sufficiently large.
\begin{theorem}\label{th:non-monotonic}
Assume there is a conflict of interest. There exists a threshold $\bar t_P < L(n+1)$ such that if
\begin{equation*}
    t_P \ge \bar t_P
\end{equation*}
then any responsive optimal interval mechanism must be non-monotonic.
\end{theorem}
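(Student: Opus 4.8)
The plan is to argue by contradiction: suppose a responsive optimal interval mechanism $x$ is monotone. By Definition \ref{def:intervalmechs} a monotone interval mechanism must be a single-cutoff mechanism, i.e. $x(k)=1$ for $k\ge \uk$ and $x(k)=0$ for $k<\uk$ (taking $\ok=n+1$), for some $\uk$ with $0<\uk\le n+1$; responsiveness rules out the constant mechanisms. First I would pin down which $\uk$ can arise. Since $x$ must satisfy \ref{eq:IC-b}, and the single-cutoff mechanism at cutoff $\uk$ has its only $b$-pivotal event at $\uk-1$ other $a$-signals, \ref{eq:IC-b} holds iff a $b$-signal agent facing $\uk-1$ other $a$-votes weakly prefers $B$, i.e. iff $L(\uk-1)\le t_J$, which by definition of $k_J$ means $\uk\le k_J$. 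On the other hand the principal's payoff from cutoff $\uk$ is $E[V(\tilde\omega)\mathbbm{1}\{\tilde k\ge \uk\}]$, and since $V(\tilde\omega)$ has positive conditional expectation given $\tilde k=k$ exactly when $L(k)>t_P$, i.e. when $k\ge k_P$, the principal strictly prefers a larger cutoff as long as the cutoff is below $k_P$ and strictly prefers a smaller cutoff once it exceeds $k_P$. Under a conflict of interest $k_J<k_P$, so among admissible single-cutoff mechanisms ($\uk\le k_J$) the best one is $\uk=k_J$, i.e. $x=x_J$. Hence the only candidate monotone responsive optimal mechanism is $x_J$.

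Next I would show that $x_J$ is strictly dominated when $t_P$ is large, by exhibiting a feasible perturbation that strictly raises the principal's payoff. The natural perturbation, following the informal discussion in the introduction, is to set $x(n+1)=1-\varepsilon$ for small $\varepsilon>0$ and simultaneously lower $x(k_J)$ from $1$ to $1-\delta$, choosing $\delta,\varepsilon$ so that \ref{eq:IC-b} continues to hold (it currently holds strictly by Lemma \ref{la:xJstrictIC}, so a small $\delta$ alone is already fine; the point of also lowering $x(n+1)$ is to relax the constraint further and let $\delta$ be larger, but for the theorem one only needs \emph{some} improving feasible direction). The change in the principal's objective from lowering $x(k_J)$ by $\delta$ is $+\delta\,\lvert E[V(\tilde\omega)\mathbbm{1}\{\tilde k=k_J\}]\rvert$, a strictly positive quantity that does not depend on $t_P$; the change from lowering $x(n+1)$ by $\varepsilon$ is $-\varepsilon\,E[V(\tilde\omega)\mathbbm{1}\{\tilde k=n+1\}]$. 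Now $E[V(\tilde\omega)\mathbbm{1}\{\tilde k=n+1\}] = V(\alpha)\Pr(\alpha)p_\alpha^{\,n+1} + V(\beta)\Pr(\beta)p_\beta^{\,n+1}$, which is positive but bounded; and $V(\beta) = -t_P V(\alpha)$, so this quantity equals $V(\alpha)\bigl(\Pr(\alpha)p_\alpha^{\,n+1} - t_P\Pr(\beta)p_\beta^{\,n+1}\bigr)$, which \emph{decreases linearly in $t_P$} and in fact tends to $0$ as $t_P\uparrow L(n+1) = \Pr(\alpha)p_\alpha^{\,n+1}/(\Pr(\beta)p_\beta^{\,n+1})$. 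Since the IC slack of $x_J$ at the $b$-constraint is a fixed positive number independent of $t_P$, I can fix $\delta$ at half that slack (measured appropriately) and then choose $\varepsilon$ large enough to absorb any residual tightening — but in fact no $\varepsilon$ is needed since the slack already accommodates $\delta$. The cleaner statement: for $t_P$ close enough to $L(n+1)$, the gain $\delta\lvert E[V\mathbbm{1}\{\tilde k=k_J\}]\rvert$ exceeds the loss $\varepsilon E[V\mathbbm{1}\{\tilde k=n+1\}]$ for a suitable feasible $(\delta,\varepsilon)$, so $x_J$ is not optimal, contradicting the assumption. Defining $\bar t_P$ as the cutoff above which this comparison goes through (and noting $\bar t_P < L(n+1)$ by the limiting argument) completes the proof.

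The main obstacle I anticipate is making the feasibility bookkeeping fully rigorous: one must verify that the perturbed mechanism still lies in $[0,1]^{\{0,\dots,n+1\}}$ and still satisfies \emph{both} \ref{eq:IC-a} and \ref{eq:IC-b}, not just \ref{eq:IC-b}. For \ref{eq:IC-a} one should check that lowering $x(k_J)$ does not break the $a$-incentive constraint — here the relaxed-problem logic behind Theorem \ref{th:ivm} (that \ref{eq:IC-a} is slack at the optimum, indeed at $x_J$, because agents are more partial toward $A$) should be invoked, or checked directly since the $a$-pivotal event at cutoff $k_J$ sits at $k_J$ total $a$-signals where $L(k_J)>t_J$ strictly. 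A secondary subtlety is that I have been assuming the optimal interval mechanism, if monotone, is literally $x_J$; one should double-check the edge case $k_P=n+1$ (where the principal-preferred cutoff is at the top and the monotonicity of the principal's preference over cutoffs still forces $\uk=k_J$ under conflict of interest $k_J<n+1$) and confirm $k_J\le n$ so that the event $\{\tilde k=n+1\}$ is genuinely distinct from the pivotal structure of $x_J$. None of these is deep, but they are where the argument could go wrong if stated carelessly.
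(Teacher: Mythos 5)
There is a genuine gap, and it sits at the very first step. You assert that a monotone responsive interval mechanism must be a deterministic single-cutoff rule ($x(k)=1$ for $k\ge\uk$, $x(k)=0$ below), so that the only monotone candidate for optimality is $x_J$. But Definition \ref{def:intervalmechs} allows the boundary values $x(\uk),x(\ok)$ to lie strictly between $0$ and $1$; a monotone interval mechanism only forces $\ok=n+1$ and $x(k)=1$ for $k>\uk$, while $x(\uk)$ may be fractional. The relevant monotone candidate is therefore not $x_J$ but the mechanism the paper calls $\hat x_J$, which equals $x_J$ except that $x(k_J)$ is lowered to $b(k_J)/\bigl(b(k_J)-b(k_J-1)\bigr)$ so that \ref{eq:IC-b} binds. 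Your argument eliminates only $x_J$, and that elimination is in fact trivial and independent of $t_P$: under a conflict of interest $v(k_J)<0$ and $x_J$ is strictly IC (Lemma \ref{la:xJstrictIC}), so lowering $x(k_J)$ slightly always improves the principal, for every $t_P$. That the threshold $\bar t_P$ never genuinely enters your argument is the symptom of the missing case — the theorem would otherwise hold for all $t_P$ under conflict, which is exactly what the hypothesis $t_P\ge\bar t_P$ is there to avoid (for small conflicts $\hat x_J$ can be optimal).

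The paper's proof fills precisely this hole: Lemmas \ref{la:randomization} and \ref{la:hatxJOptimalMonotone} show that any optimal monotone interval mechanism must be $\hat x_J$ (or the zero mechanism), and the substantive step is then to rule out $\hat x_J$ when $t_P$ is large. Because \ref{eq:IC-b} already binds at $\hat x_J$, one cannot simply lower $x(k_J)$ further; the perturbation must simultaneously lower $x(n+1)$ (whose coefficient $b(n+1)-b(n)=-b(n)<0$ relaxes \ref{eq:IC-b}) so as to keep \ref{eq:IC-b} at equality, and the payoff comparison between the gain at $k_J$ and the loss at $n+1$ is what produces the explicit threshold $\bar t_P<L(n+1)$, using that $v(n+1)\to 0$ as $t_P\uparrow L(n+1)$. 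Your limiting computation of $E[V(\tilde\omega)\mathbbm{1}\{\tilde k=n+1\}]$ is the right ingredient for that comparison, but to make it do any work you must base the perturbation at $\hat x_J$, where the binding constraint forces the trade-off; based at $x_J$, as you propose, no compensation at $n+1$ is needed and the argument never engages the case the theorem is actually about. (A smaller inaccuracy: the gain $\lvert E[V(\tilde\omega)\mathbbm{1}\{\tilde k=k_J\}]\rvert$ does depend on $t_P$, since $V(\beta)=-t_PV(\alpha)$; this does not hurt, as it grows with $t_P$, but the claim that it is independent of $t_P$ is not correct.)
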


\appendix

\section*{Appendix}
\section{Proof of Lemma \ref{la:votingmechs}}
\begin{proof}
Let $z:\{a,b\}^{n+1}\to[0,1]$ be an IC direct mechanism and let $\pi:\{1,\dots,n+1\} \to \{1,\dots,n+1\}$ be a permutation of the agents. Consider the mechanism $z_\pi(\cdot)$ defined by
\begin{equation*}
    (s_1,\dots,s_{n+1})\mapsto z(s_{\pi(1)},\dots,s_{\pi(n+1)}).
\end{equation*}
This is the same mechanism as before except that agents have been relabeled according to $\pi$. Relabeling does not break incentives since the agents are ex-ante identical and so $z_\pi(\cdot)$ is IC. Moreover, $z_\pi(\cdot)$ leads to the same expected payoff to the principal as $z(\cdot)$. This is because $E[V(\tilde\omega)z_\pi(\tilde s)]=E[V(\tilde\omega)E[z_\pi(\tilde s)|\tilde \omega]]$ and symmetry of the conditional type distribution implies that $E[z_\pi(\tilde s)|\tilde \omega] = E[z(\tilde s)|\tilde \omega]$.

Now define a new mechanism $z'(\cdot)$ by 
\begin{equation*}
    z'(\cdot) = \frac{1}{(n+1)!}\sum_{\pi} z_\pi(\cdot),
\end{equation*}
where the sum is over all permutations of $\{1,\dots,n+1\}$. Since the set of IC mechanisms is convex, $z'(\cdot)$ is IC and since each $z_\pi(\cdot)$ leads to the same expected payoff to the principal as $z(\cdot)$, so does $z'(\cdot)$. Finally, since $z'(\cdot)$ averages over all permutations of the agents it is an anonymous mechanism and hence $z'(s_1,\dots,s_{n+1})$ depends only on the number of $a$-signals among $s_1,\dots,s_{n+1}$.
\end{proof}

\section{The principal's problem as a linear program}\label{apx:LP}
To prove most of our results it is convenient to express the principal's problem as a linear program. Toward this aim we first introduce some notation. For any $k\in\{0,\dots,n+1\}$ define
\begin{equation*}
    \B_\omega(k,n) = \binom{n}{k}p_\omega^k(1-p_\omega)^{n-k}
\end{equation*}
with the convention that $\binom{n}{-1}=\binom{n}{n+1}=0$. Let
\begin{align*}
    v(k)&=\B_\beta(k,n+1)(L(k)-t_P),\\
    a(k)&=\B_\beta(k,n)p_\beta(L(k+1)-t_J),\\
    b(k)&= \B_\beta(k,n)(1-p_\beta)(L(k)-t_J).
\end{align*}
Then it is straightforward to show that the principal's problem reduces to the following linear program:
\begin{align}\notag{}
     \max_{0\le x(k) \le 1} & \sum_{k=0}^{n+1} v(k) x(k) \\
     s.t. \;\;\;\;\label{eq:IC-a'}\tag{IC-a'}
     & \sum_{k=0}^{n+1} [a(k)-a(k-1)]x(k)\le 0.\\\label{eq:IC-b'}\tag{IC-b'}
     & \sum_{k=0}^{n+1} [b(k)-b(k-1)]x(k)\ge 0.
\end{align}
\commentout{
For later reference we also record the the relaxed problem (equation \textcolor{blue}{TODO: Add reference to main text}) and its Lagrangian relaxation form.
The relaxed problem reads:
\begin{equation}
\begin{aligned}
    \max_{0\le (x(k))_{k=0}^{n+1}\le 1} & \sum_{k=0}^{n+1} v(k) x(k)\\
     \mbox{s.t.} \;\;\;\;& \sum_{k=0}^{n+1} [b(k)-b(k-1)]x(k)\ge 0. 
\end{aligned}\tag{$R'$}
\end{equation}
The Lagrangian relaxation of the relaxed problem reads:
\begin{align}\label{eq:LR}
    \max_{0\le (x(k))_{k=0}^{n+1}\le 1} \sum_{k=0}^{n+1} [v(k)+\mu (b(k)-b(k-1))]x(k).\tag{LR}
\end{align}
}

\commentout{
\section{Proof of Lemma \ref{eq:conflict}}
We make use of the following lemma.
\begin{lemma}\label{la:xJstrictIC}
The agent-preferred mechanism $x_J$ is strictly IC.
\end{lemma}
\begin{proof}
We follow the notation of Appendix \ref{apx:LP}. Recall from equation \eqref{eq:cutoffordering} that $k_J>0$. Hence \ref{eq:IC-b} is strictly satisfied because
\begin{align*}
    \sum_{k=0}^{n+1} x_J(k) (b(k)-b(k-1)) 
    &= b(n+1)-b(k_J-1)\\
    &= -(1-p_\beta)\B_\beta(k_J-1,n)(L(k_J-1)-t_J)\\
    &> 0.
\end{align*}
Similarly, \ref{eq:IC-a} is strictly satisfied as well:
\begin{align*}
        \sum_{k=0}^{n+1} x_J(k) (a(k)-a(k-1)) 
        &= a(n+1)-a(k_J-1)\\
        &= -p_\beta\B_\beta(k_J-1,n)(L(k_J)-t_J) \\
        &< 0.
\end{align*}
\end{proof}
If there is no conflict of interest then $x_P=x_J$ and so first-best can be achieved, since $x_J$ is IC by Lemma \ref{la:xJstrictIC}. This proves one direction of Proposition \ref{eq:conflict}. Now assume there is a conflict of interest. By assumption \ref{as:noindiff}, $x_P$ is the unique voting mechanism that yields the first-best payoff under truthful reporting. Hence it only remains to show that $x_P$ is not IC when there is a conflict of interest. This is established by the following lemma.
\begin{lemma}
If there is a conflict of interest, then $x_P$ satisfies \ref{eq:IC-a} strictly but violates \ref{eq:IC-b}.
\end{lemma}
\begin{proof}
Since there is a conflict of interest it holds that $k_P \ge k_J + 1$. Thus, $x_P$ satisfies \ref{eq:IC-a} strictly, because 
\begin{align*}
        \sum_{k=0}^{n+1} x_P(k) (a(k)-a(k-1)) 
        &= a(n+1)-a(k_P-1)\\
        &= -p_\beta\B_\beta(k_P-1,n)(L(k_P)-t_J) \\
        &< 0,
\end{align*}
but violates \ref{eq:IC-b} since
\begin{align*}
    \sum_{k=0}^{n+1} x_P(k) (b(k)-b(k-1)) 
    &= b(n+1)-b(k_P-1)\\
    &= -(1-p_\beta)\B_\beta(k_P-1,n)(L(k_P-1)-t_J)\\
    &< 0.
\end{align*}
The strict inequalities hold due to Assumption \ref{as:noindiff}.
\end{proof}
}

\section{Proof of Theorem \ref{th:ivm}}\label{apx:ivmproof}
Firstly, if there is no conflict of interest then by Proposition \ref{prp:first-best}, first-best can be achieved. Since the first-best mechanism $x_P$ is an interval mechanism, there is nothing further to show in this case. Hence assume from now on that there is a conflict of interest.

In the notation of Appendix \ref{apx:LP}, the relaxed problem is equivalent to the following linear program:
\begin{equation}
\begin{aligned}
    \max_{0\le (x(k))_{k=0}^{n+1}\le 1} & \sum_{k=0}^{n+1} v(k) x(k)\\
     \mbox{s.t.} \;\;\;\;& \sum_{k=0}^{n+1} [b(k)-b(k-1)]x(k)\ge 0. 
\end{aligned}\tag{$R'$}
\end{equation}
Lemma \ref{la:relaxed-interval} below shows that this problem is solved by an interval mechanism. Lemma \ref{la:interval-IC-a} verifies that the resulting mechanism satisfies \ref{eq:IC-a}, and therefore also solves the principal's problem. The proofs of these lemmas rely on 
three technical auxiliary results (Lemmas \ref{la:extreme}, \ref{la:exp+hyper}, and \ref{la:w}).

\begin{lemma}\label{la:relaxed-interval}
Assume there is a conflict of interest. The relaxed problem is solved either by $x^*\equiv 0$ or by an interval mechanism $x^*$ with cutoffs $\uk, \ok$ such that
\begin{equation*}
    k_J \le \uk \le k_P \le \ok.
\end{equation*}
Moreover, either (i) $x^*(\uk)=1$ and $\uk>k_J$, or (ii) $x^*(\ok)=1$.
\end{lemma}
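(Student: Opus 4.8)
The plan is to dualize the linear program $(R')$ and read the structure of the optimum off the sign pattern of the resulting ``virtual utility.'' Since the feasible region of $(R')$ is compact and the objective linear, an optimal $x^*$ exists, and as there is a single inequality constraint, linear-programming duality yields a multiplier $\mu\ge 0$ such that $x^*$ maximizes the Lagrangian $\sum_{k=0}^{n+1}w_\mu(k)\,x(k)$ over the box $[0,1]^{n+2}$, where
\[
w_\mu(k)\;=\;v(k)+\mu\bigl(b(k)-b(k-1)\bigr),
\]
and complementary slackness holds. I would first exclude $\mu=0$: then $x^*$ would have to maximize $\sum_k v(k)x(k)$ over the box, which by Assumption~\ref{as:noindiff} is uniquely the first-best mechanism $x_P$; but $x_P$ violates the constraint of $(R')$ under a conflict of interest (this is established in the discussion preceding Proposition~\ref{prp:first-best}), hence is infeasible -- a contradiction. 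So $\mu>0$ and the constraint binds at $x^*$.

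The heart of the proof is that $w_\mu$ changes sign at most twice, and with pattern $-,+,-$ when it does so twice. I would obtain this by a change of variables: using the binomial identities $(1-p_\omega)\B_\omega(k,n)=\tfrac{n+1-k}{n+1}\B_\omega(k,n+1)$ and $p_\omega\B_\omega(k-1,n)=\tfrac{k}{n+1}\B_\omega(k,n+1)$ together with $\B_\beta(k,n+1)L(k)=\rho\,\B_\alpha(k,n+1)$, where $\rho=\Pr(\alpha)/\Pr(\beta)$, one rewrites $w_\mu(k)=\rho\,\phi(k)\,\B_\alpha(k,n+1)-\psi(k)\,\B_\beta(k,n+1)$ with $\phi,\psi$ affine in $k$ and of strictly negative slope (here $\mu>0$ and $t_J>0$ are used). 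Dividing by $\B_\beta(k,n+1)>0$ and using $\B_\alpha(k,n+1)/\B_\beta(k,n+1)=c\,\ell^{\,k}$ with $c>0$ and $\ell=\tfrac{p_\alpha(1-p_\beta)}{p_\beta(1-p_\alpha)}>1$, the sign of $w_\mu(k)$ equals that of $\rho c\,\ell^{\,k}\phi(k)-\psi(k)$, whose derivative (viewed on $\mathbb R$) is $\ell^{\,k}$ times an affine function of $k$ plus a strictly positive constant; since $\ell>1$, such an expression changes sign at most once, from $+$ to $-$, so $\rho c\,\ell^{\,k}\phi(k)-\psi(k)$ is unimodal and therefore changes sign at most twice, with pattern $-,+,-$. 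The identical computation applied to $k\mapsto b(k)-b(k-1)$ -- which, divided by $\B_\beta(k,n+1)$, is also $\ell^{\,k}$ times an affine function minus an affine function -- shows this map is unimodal; since $b(0)-b(-1)=b(0)<0$, $b(k_J)-b(k_J-1)>0$ and $b(n+1)-b(n)=-b(n)<0$, its sign pattern is exactly $-,+,-$, with $k_J$ in the positive block. These unimodality statements -- essentially the content of the auxiliary Lemmas~\ref{la:exp+hyper} and~\ref{la:w} -- are where the precise algebra of the binomial weights and the likelihood ratio has to be used exactly right; I expect this to be the main obstacle, the remainder being bookkeeping.

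Granting these facts, $\{k:w_\mu(k)>0\}$ is an interval of integers, $\{k:w_\mu(k)<0\}$ is its complement minus the at most two zeros of $w_\mu$, and every box-maximizer of the Lagrangian puts $x\equiv1$ on the former, $x\equiv0$ on the latter, and arbitrary values at the zeros; it is thus either $\equiv0$ or an interval mechanism in the sense of Definition~\ref{def:intervalmechs}. As $x^*$ is such a maximizer and the constraint binds, the first sentence of the lemma follows. Assume now $x^*\not\equiv0$, with implementation interval $\{\uk,\dots,\ok\}$ and boundary values $\underline\theta=x^*(\uk)$, $\bar\theta=x^*(\ok)$ in $(0,1]$; telescoping, the binding constraint reads
\[
\bar\theta\,b(\ok)+(1-\bar\theta)\,b(\ok-1)\;=\;\underline\theta\,b(\uk-1)+(1-\underline\theta)\,b(\uk).
\]
That $\ok\ge k_P$ is immediate: if $\ok<k_P$ then $x^*$ is supported where $v<0$, so it has negative value, contradicting optimality since $x^*\equiv0$ is feasible with value $0$. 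Since $\ok\ge k_P>k_J$ we have $\ok-1\ge k_J$, hence $b(\ok-1)>0$ and the left-hand side above is $\ge0$; but if $\uk<k_J$ then $b(\uk)<0$, $b(\uk-1)\le0$, so the right-hand side is $<0$ -- except when $\uk=0$ and $\underline\theta=1$, in which case $x^*\equiv1$, which is strictly dominated by the feasible mechanism $x_J$ -- either way a contradiction, so $\uk\ge k_J$. Finally, if $\uk>k_P$ then raising $x(k_P)$ from $0$ would be feasible and would strictly raise the objective (as $v(k_P)>0$) unless $b(k_P)-b(k_P-1)<0$; given the $-,+,-$ shape of $k\mapsto b(k)-b(k-1)$ and $k_P>k_J$ (so $k_P$ lies to the right of its positive block), this forces $b(k)-b(k-1)<0$ for all $k\ge k_P$, hence for all $k$ in the interval, so the binding-constraint value is strictly negative -- a contradiction. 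Therefore $k_J\le\uk\le k_P\le\ok$.

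For the dichotomy, take $x^*$ to be an extreme point of the (single-constraint) feasible polytope that is optimal; by the foregoing it is an interval mechanism with $k_J\le\uk\le k_P\le\ok$, and by Lemma~\ref{la:extreme} at most one of its coordinates lies strictly in $(0,1)$, so $\bar\theta=1$ or $\underline\theta=1$. If $\bar\theta=1$ we are in case (ii). Otherwise $\underline\theta=1$ and $\bar\theta<1$, and were $\uk=k_J$ the binding constraint would reduce to $\bar\theta\,b(\ok)+(1-\bar\theta)\,b(\ok-1)=b(k_J-1)<0$, whose left-hand side is strictly positive because $\ok-1\ge k_J$ gives $b(\ok-1)>0$ while $\bar\theta<1$ -- impossible; hence $\uk>k_J$ and case (i) holds.
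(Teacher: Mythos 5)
Your proposal is correct and rests on the same overall architecture as the paper's proof -- LP/Lagrangian duality with a single multiplier, exclusion of $\mu=0$ via infeasibility of $x_P$ under a conflict of interest, a sign-structure argument for the virtual utility $v(k)+\mu(b(k)-b(k-1))$, and Lemma~\ref{la:extreme} for the boundary dichotomy -- but two ingredients are genuinely different. Where the paper normalizes by $\B_\beta(k,n+1)$ and compares the exponential $L(k)$ with a hyperbola (Lemma~\ref{la:exp+hyper}, with its $D>0$/$D<0$ case split and a log-concavity argument), you normalize into ``exponential times decreasing affine minus decreasing affine'' and differentiate: the derivative is $\ell^{k}$ times a strictly decreasing affine function plus a positive constant, hence changes sign at most once from $+$ to $-$, so the virtual utility is increasing-then-decreasing and its sign pattern is $-,+,-$. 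This is cleaner and avoids the hyperbola case analysis, but note that the one-sign-change claim genuinely uses that the affine factor is \emph{decreasing} (for an increasing affine factor $\ell^k A(k)+P$ can change sign twice), so that step should be spelled out: on $\{A\ge 0\}$ the expression is at least $P>0$, and on $\{A<0\}$ it is strictly decreasing. Second, your verification of $k_J\le\uk\le k_P\le\ok$ and of (i)/(ii) runs through the telescoped binding constraint and the quasiconcavity of $k\mapsto b(k)-b(k-1)$, whereas the paper compares against $x\equiv 0$, $x_J$, and a small perturbation of $x_J$; a merit of your route is that it actually argues $\uk\le k_P$, which the paper's corresponding step states but does not separately verify. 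Three small patches are needed: (a) your telescoped identity and the inference ``$\underline{\theta}=1$ or $\bar{\theta}=1$'' from Lemma~\ref{la:extreme} implicitly assume $\uk<\ok$; this is harmless where you use the identity (there $\uk\le k_J<k_P\le\ok$ forces $\uk<\ok$), and the degenerate single-point mechanism with a fractional entry can never be optimal (since $\uk=\ok\ge k_P$ gives $v(\uk)>0$ and feasibility requires $b(\uk)-b(\uk-1)\ge 0$, so raising the entry is feasible and strictly profitable), so the dichotomy survives; (b) in the exceptional case $\uk=0$, $\underline{\theta}=1$ you assert $x^*\equiv 1$ -- this does follow, but only after noting that the binding constraint together with $b(\ok)\ge 0$ and $b(\ok-1)>0$ forces $\bar{\theta}=1$ and $\ok=n+1$; (c) in the ``left-hand side $\ge 0$'' step you should also record $b(\ok)\ge 0$ (which holds since either $\ok=n+1$ or $L(\ok)>t_J$), not only $b(\ok-1)>0$. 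None of these affects the validity of the argument.
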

\begin{proof}
The proof is based on a Lagrangian relaxation approach. The conflict of interest implies that a constraint qualification condition (Slater condition) holds for the relaxed problem, which implies that any solution to the relaxed must also solve the associated Lagrangian relaxation problem. Next, we show that a solution to the relaxed problem is necessarily an interval mechanism. Finally, we verify the conditions on the boundaries of the implementation interval by exploiting a classical result stating that an optimal solution of a linear program is attained at one of the extreme points of the feasible region. We proceed in five steps.
\begin{itemize}
    \item[] \textit{Step 1: There is a feasible mechanism $x$ with $\sum_{k=0}^{n+1} [b(k)-b(k-1)]x(k)> 0$:}
    
    The mechanism $x'\equiv 1$ that always chooses $A$, satisfies \ref{eq:IC-b} at equality. If we lower $x'(n+1)$ by some sufficiently small $\delta>0$ then the modified mechanism $x''$ satisfies
    \begin{equation*}
        \sum_{k=0}^{n+1} [b(k)-b(k-1)] x''(k)
        =-\delta[b(n+1)-b(N)]
        =\delta b(n) > 0,
    \end{equation*}
    since $b(n+1)=0$ and $b(n)=B_\beta(n, n)(1-p_\beta)(L(n)-t_J)>0$, where the latter inequality holds because $k_J<k_P\le n+1$ (conflict of interest).
        \item[] \textit{Step 2: For any solution $x^*$ to the relaxed problem, \ref{eq:IC-b} binds and there exists $\mu>0$ such that $x^*$ solves
        \begin{align}\label{eq:LR}
            \max_{0\le x(k)\le 1} \sum_{k=0}^{n+1} [v(k)+\mu (b(k)-b(k-1))]x(k):\tag{LR}
        \end{align}}
        
        By Step 1, the existence of a conflict of interest implies that the Slater condition is satisfied. Hence we can invoke Proposition 4 on page 348 of \cite{luenberger2015linear}, implying that if $x^*$ solves the relaxed problem then there exists there exists a Lagrange multiplier $\mu \ge 0$ such that $x^*$ also solves the Lagrangian relaxation problem \eqref{eq:LR}. Furthermore, $\mu\sum_{k=0}^{n+1}[b(k)-b(k-1)]x^*(k)=0$ (complementary slackness). If $\mu=0$ then the unique solution to \eqref{eq:LR} is $x_P$. But since there is a conflict of interest, $x_P$ violates \ref{eq:IC-b} (recall the discussion before Proposition \ref{prp:first-best})
        and therefore cannot be a solution to the relaxed problem. It follows that $\mu > 0$ and hence, by complementary slackness, \ref{eq:IC-b} must bind.
        
        \item[]\textit{Step 3: For any $\mu > 0$ the function $v(k) + \mu (b(k) - b(k-1))$ crosses zero at most twice. If there are two crossings then the first is from below and the second from above.}
        
        First note that for $k=0,\dots,n+1$ it holds that
        \begin{align*}
            \binom{n}{k} 
            &= \frac{n+1-k}{n+1}\binom{n+1}{k} = \left(1-\frac{k}{n+1}\right)\binom{n+1}{k},\\
            \binom{n}{k-1} 
            &= \frac{k}{n+1} \binom{n+1}{k},
        \end{align*}
        so that 
        \begin{align*}
            &\B_\beta(k,n) = \frac{1}{1-p_\beta}
            \left(1-\frac{k}{n+1}\right)
            \B_\beta(k,n+1),\\
            &\B_\beta(k-1,n) = \frac{k}{p_\beta (n+1)}
            \B_\beta(k,n+1).
        \end{align*}
        Thus we have for $\mu>0$:
        \begin{align*}
            &v(k) + \mu [b(k) - b(k-1)]\\
            =&\B_\beta(k,n+1)(L(k) - t_P)\\
            &+\mu(1-p_\beta)[\B_\beta(k,n)(L(k) - t_J)-\B_\beta(k-1,n)(L(k-1) - t_J)]\\
            =&\B_\beta(k,n+1)
            \bigg\{(L(k) - t_P) \\
            &+\mu(1-p_\beta)\bigg[
            \frac{1}{1-p_\beta}
            \left(1-\frac{k}{n+1}\right)(L(k) - t_J)
            -
            \frac{k}{p_\beta (n+1)}(L(k-1) - t_J)\bigg]\bigg\}.
        \end{align*}
        Using that $L(k-1) = \frac{p_\beta}{1-p_\beta}\frac{1-p_\alpha}{p_\alpha}L(k)$, the above expression becomes
        \begin{align*}
            &\B_\beta(k,n+1)
            \bigg\{(L(k) - t_P)\\
            &+\mu\bigg[
            \left(1-\frac{k}{n+1}\right)(L(k) - t_J)-
            \frac{k}{(n+1)}\left(\frac{1-p_\alpha}{p_\alpha}L(k) - \frac{1-p_\beta}{p_\beta}t_J\right)\bigg]\bigg\}\\
            =&\B_\beta(k,n+1)
            \bigg\{(L(k) - t_P)\\
            &+\mu\bigg[
            \left(1-\frac{k}{p_\alpha(n+1)}\right)L(k)-
            \left(1-\frac{k}{p_\beta(n+1)}\right) t_J\bigg]\bigg\}\\
            =&\B_\beta(k,n+1)\phi(k),
        \end{align*}
        where 
        \begin{align*}
            \phi(k) 
            &= \left(
            1+\mu\left(1- \frac{k}{m_\alpha}\right)
            \right)L(k)
            - 
            \left(
            t_P + \mu t_J\left(1 - \frac{k}{m_\beta}\right)
            \right)\\
            &= 
            \frac{\mu}{m_\alpha}
            \left(
            \frac{1+\mu}{\mu}m_\alpha - k
            \right)L(k)
            - 
            \mu\frac{t_J}{m_\beta}
            \left(
            \frac{t_P}{t_J}\frac{m_\beta}{\mu}+m_\beta-k
            \right)
        \end{align*}
        and 
        \begin{equation*}
            m_\omega=p_\omega(n+1)
        \end{equation*} 
        is the expected number of $a$-signals if the state is $\omega$. 

        Since $\B_\beta(k, n+1)>0$ for $k=0,\dots,n+1$, $v(k) + \mu [b(k) - b(k-1)]$ has the same sign as $\phi(k)$ for any such $k$. We will proceed by considering $\phi(k)$ as a continuous function $\phi: \mathbb{R}\to\mathbb{R}$ and show that it has at most two zeros. Furthermore, if $\phi(k)$ has two zeros $k' < k''$ then $\phi(k) > 0$ for all $k'<k<k''$. This implies the claim.
        
        \textit{Case 1: $\frac{1+\mu}{\mu}m_\alpha = \frac{t_P}{t_J}\frac{m_\beta}{\mu}+m_\beta$.}
        
        Then 
        \begin{align*}
            \phi(k) 
            &=
            \frac{\mu}{m_\alpha}
            \left(
            \frac{1+\mu}{\mu}m_\alpha - k
            \right)
            \left(L(k)
            - 
            \frac{p_\alpha}{p_\beta}t_J
            \right), \quad k \in \mathbb{R}.
        \end{align*}
        
        This function equals zero exactly if $k = \frac{1+\mu}{\mu}m_\alpha$ or $L(k)-\frac{p_\alpha}{p_\beta}t_J$, is strictly positive for any $k$ that lies in the open interval between its zeros, and strictly negative for any $k$ that that lies outside the closed interval between its zeros.  
        
        \textit{Case 2: $\frac{1+\mu}{\mu}m_\alpha \ne \frac{t_P}{t_J}\frac{m_\beta}{\mu}+m_\beta$.}
        
        Then $\phi(\frac{1+\mu}{\mu}m_\alpha) \ne 0$ and 
        \begin{align*}
            \phi(k) 
            &=
            \frac{\mu}{m_\alpha}
            \left(
            \frac{1+\mu}{\mu}m_\alpha - k
            \right)
            \psi(k), \quad k \in \mathbb{R}\setminus\left\{\frac{1+\mu}{\mu}m_\alpha\right\},
        \end{align*}
        where
        \begin{align*}
            \psi(k)
            &= 
            L(k)
            - 
            \frac{p_\alpha}{p_\beta}t_J\frac{
            \frac{t_P}{t_J}\frac{m_\beta}{\mu}+m_\beta-k}{
            \frac{1+\mu}{\mu}m_\alpha - k}\\
            &= 
            L(k)
            -\left( 
            \frac{p_\alpha}{p_\beta}t_J
            +
            \frac{\frac{p_\alpha}{p_\beta}t_J\left(
            \frac{t_P}{t_J}\frac{m_\beta}{\mu}+m_\beta-\frac{1+\mu}{\mu}m_\alpha\right)}{
            \frac{1+\mu}{\mu}m_\alpha - k}\right),
            \quad k \in \mathbb{R}\setminus\left\{\frac{1+\mu}{\mu}m_\alpha\right\}.
        \end{align*}
        
        In the notation of Lemma \ref{la:exp+hyper}, $\psi(k)$ is therefore the difference between $L(k)$, an exponential function (with $Q = \left(\frac{1-p_\alpha}{1-p_\beta}\right)^{n+1} > 0$ and $R = \frac{p_\alpha/(1-p_\alpha)}{p_\beta/(1-p_\beta)}>1$) and a hyperbola (with $C = \frac{p_\alpha}{p_\beta}t_J >0$, $D=\frac{p_\alpha}{p_\beta}t_J\left(\frac{t_P}{t_J}\frac{m_\beta}{\mu}+m_\beta-\frac{1+\mu}{\mu}m_\alpha\right)\ne0$ and $E = \frac{1+\mu}{\mu}m_\alpha$).
        
        Note that, $\phi(E)\ne 0$ and $\phi(k)$ has the same sign as $\psi(k)$ on $(-\infty,E)$ and the same sign as $-\psi(k)$ on $(E, \infty)$. Applying Lemma \ref{la:exp+hyper} to $\psi(k)$, it now follows that $\phi(k)$ intersects zero at most twice on $\mathbb{R}$. Furthermore, if $\phi(k)$ intersects zero twice, then then the first intersection is from below and the second from above. 
        
        \item[]\textit{Step 4: If $x^*$ solves the relaxed problem then either $x^*\equiv 0$ or $x^*$ is a nonzero interval mechanism with thresholds satisfying $k_J \le \uk \le k_P \le \ok$.}
        
        A voting mechanism $x^*$ solves \eqref{eq:LR} if and only if $x^*(k)=1$ whenever $v(k)+\mu(b(k)-b(k-1)) > 0$ and $x^*(k)=0$ whenever $v(k)+\mu(b(k)-b(k-1)) < 0$. By Step 3, $v(k)+\mu(b(k)-b(k-1))$ crosses zero at most twice. If there are two crossings, then the first one is from below and the second one from above. Thus, either $v(k)+\mu(b(k)-b(k-1)) \le 0$ for all $k \in \{0,\dots,n+1\}$, or there exist $k', k'' \in \{0,\dots,n+1\}$, $k'\le k''$, such that $\{k \in \{0,\dots,n+1\}:v(k)+\mu(b(k)-b(k-1)) \ge 0\} = \{k',\dots,k''\}$ and $v(k)+\mu(b(k)-b(k-1)) > 0$ for all $k' < k < k''$. Hence any solution to \eqref{eq:LR} is an interval mechanism. But by Step 2 this means that any solution to the relaxed problem must also be an interval mechanism.
        
        It remains to show that a nonzero interval mechanism that solves the relaxed problem must have cutoffs $\uk, \ok$ satisfying $k_J \le \uk \le k_P \le \ok$. Let $x^*$ be a nonzero interval mechanism solving the relaxed problem. Toward a contradiction, suppose that $\ok < k_P$. Then $x\equiv 0$ would yield a strictly higher payoff, a contradiction. Hence $\ok \ge k_P$. Finally, suppose $\uk<k_J$. Since $\ok \ge k_P$, $x_J$ would yield a strictly higher payoff, a contradiction. Hence $\uk \ge k_J$.
        
        \item[]\textit{Step 5: If $x\equiv0$ does not solve the relaxed problem then there exists a solution $x^*$ to the relaxed problem that is a nonzero interval mechanism with $x^*(\uk)=1$ and $\uk>k_J$ or $x^*(\ok)=1$.}
        
        First note that the relaxed problem is a linear program with a nonempty and bounded feasible region. It is well-known (see for instance Corollaries 3.5, 3.6 and 3.10 in \cite{korte2018combinatorial}) that the optimal value of a linear program with a nonempty, bounded feasible region is attained at an extreme point of the feasible region. Hence there is a solution to the relaxed problem that is an extreme point of the polytope
        \begin{align*}
            \mathcal{X}_b = \left\{x \in \mathbb{R}^{n+2}: 0\le x(k) \le 1 \,\,\forall k,\,\, \sum_{k=0}^{n+1}x(k)(b(k)-b(k-1))\ge 0\right\} \subset \mathbb{R}^{n+2}.
        \end{align*}
        
        By Step 4, if $x^*$ solves the relaxed problem then either $x^*\equiv0$ or $x^*$ is an interval mechanism with $k_J \le \uk \le k_P \le \ok$. In particular, \textit{every} solution to the relaxed problem is an interval mechanism. Suppose that $x^*\equiv 0$ is not a solution. Since there is a solution that is an extreme point of $\mathcal{X}_b$ and every solution must be an interval mechanism, there exists a solution $x^*$ to the relaxed problem that is an interval mechanism and also an extreme point. By Lemma \ref{la:extreme}, $x^*$ has at most one entry strictly between 0 and 1. That is to say, $x^*(\uk) = 1$ or $x^*(\ok)=1$ holds.
        
        Finally, suppose that $x^*$ is a nonzero interval mechanism solving the relaxed problem and that $x^*(\uk)=1$. Toward a contradiction, suppose that $\uk=k_J$. If $k_P\le\ok<n+1$ or $x^*(\ok)<1$ then $x_J$ yields a strictly higher value of the relaxed problem, a contradiction. Thus $\ok=n+1$ and $x^*(n+1)=1$. That is, $x^*=x_J$. But we have assumed that there is a conflict of interest, and so lowering $x_J^*(\uk)$ slightly would strictly increase the value of the relaxed problem while leaving incentives intact (since $x_J$ is strictly IC by Lemma \ref{la:xJstrictIC}). Hence $x_J$ cannot be a solution to the relaxed problem, a contradiction. Thus, if $x^*(\uk)=1$ then it must hold that $\uk>k_J$.
\end{itemize}

\end{proof}

\begin{lemma}\label{la:interval-IC-a}
Let $x^*$ be a nonzero interval mechanism such that either (i) $x(\uk)=1$ and $\uk>k_J$ or (ii) $x(\ok)=1$ and $\uk\ge k_J$. If $x^*$ satisfies $IC\text{-}b$ at equality then it also satisfies $IC\text{-}a$.
\end{lemma}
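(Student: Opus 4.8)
The plan is to work with the linear-programming form \ref{eq:IC-a'}/\ref{eq:IC-b'} of Appendix \ref{apx:LP} and to exploit a clean algebraic link between the coefficient vectors $(a(k))$ and $(b(k))$. The starting observation is the identity
\[
a(k) = \lambda\, b(k) + c\,\B_\beta(k,n), \qquad \lambda := \tfrac{p_\alpha}{1-p_\alpha} > 0, \quad c := \tfrac{(p_\alpha-p_\beta)\,t_J}{1-p_\alpha} > 0,
\]
valid for every $k$. It follows because $L(k+1) = \tfrac{p_\alpha(1-p_\beta)}{p_\beta(1-p_\alpha)}L(k)$, so the term $L(k+1)-t_J$ inside $a(k)$ is affine in $L(k)$, while $b(k)=(1-p_\beta)\B_\beta(k,n)(L(k)-t_J)$ already carries $L(k)$ up to the strictly positive factor $(1-p_\beta)\B_\beta(k,n)$; here $c>0$ since $t_J>0$ (because $U(\beta)<0<U(\alpha)$) and $p_\alpha>p_\beta$.

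Next I would evaluate the two constraint ``slacks'' on the interval mechanism $x^*$. Writing $\ux:=x^*(\uk)$, $\ox:=x^*(\ok)\in(0,1]$ (both equal to $1$ when $\uk=\ok$, by hypotheses (i)/(ii)), a telescoping sum over the implementation interval shows that for any vector $w$ with $w(n+1)=0$,
\[
\sum_{k=0}^{n+1}\big[w(k)-w(k-1)\big]x^*(k) = \Lambda^{\ok}(w) - \Lambda^{\uk}(w),
\]
where $\Lambda^{\uk}(w):=\ux\,w(\uk-1)+(1-\ux)\,w(\uk)$ and $\Lambda^{\ok}(w):=(1-\ox)\,w(\ok-1)+\ox\,w(\ok)$ are convex combinations ``at the lower boundary'' and ``at the upper boundary.'' Applying this with $w=b$ and $w=a$ and substituting the identity, the left-hand side of \ref{eq:IC-a'} equals $\lambda$ times the left-hand side of \ref{eq:IC-b'} plus $c\,[\Lambda^{\ok}(\B_\beta(\cdot,n))-\Lambda^{\uk}(\B_\beta(\cdot,n))]$. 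Since $x^*$ satisfies \ref{eq:IC-b'} with equality the first term vanishes, so the lemma reduces to the combinatorial inequality $\Lambda^{\uk}(\B_\beta(\cdot,n))\ge\Lambda^{\ok}(\B_\beta(\cdot,n))$.

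This last inequality is the heart of the matter and is the step I expect to be the main obstacle: it is false in general (the binomial weights $\B_\beta(\cdot,n)$ need not be decreasing on the relevant range), so the binding \ref{eq:IC-b'} constraint is indispensable. Let $M_L:=\Lambda^{\uk}(\B_\beta(\cdot,n))$, $M_R:=\Lambda^{\ok}(\B_\beta(\cdot,n))$; here $M_L>0$ because $\uk\ge k_J\ge 1$ (by \eqref{eq:cutoffordering}) forces $\B_\beta(\uk-1,n)>0$ while $\ux>0$. If $M_R=0$ we are done, so assume $M_R>0$. Put $\ell_i:=L(i)-t_J$, which is strictly increasing with $\ell_i>0$ for $i\ge k_J$; in particular $\ell_\uk>0$ since $\uk\ge k_J$. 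Using $b(i)=(1-p_\beta)\B_\beta(i,n)\ell_i$, the binding constraint $\Lambda^{\uk}(b)=\Lambda^{\ok}(b)$ rewrites as $M_L\bar\ell_L = M_R\bar\ell_R$, where $\bar\ell_L$, $\bar\ell_R$ are the $\B_\beta(\cdot,n)$-weighted averages of the $\ell$-values entering $\Lambda^{\uk}$ and $\Lambda^{\ok}$. Now $\bar\ell_L\le\ell_\uk$ (it averages $\ell_{\uk-1}<\ell_\uk$ with $\ell_\uk$) and $\bar\ell_R\ge\ell_\uk$ (it averages $\ell$-values at indices $\ge\uk$; when $\uk=\ok$ then $\ox=1$ and $\bar\ell_R=\ell_\ok=\ell_\uk$). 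Since $\ell_\uk>0$ we get $M_R\bar\ell_R>0$, hence $\bar\ell_L>0$, and therefore $M_L = M_R\,\bar\ell_R/\bar\ell_L \ge M_R$ because $\bar\ell_R\ge\bar\ell_L>0$. This yields $M_L\ge M_R$ and hence \ref{eq:IC-a'}, which finishes the proof. (The convention $\B_\beta(n+1,n)=a(n+1)=b(n+1)=0$ keeps the telescoping identity valid also when $\ok=n+1$; the degenerate case $\uk=\ok$ is covered by the same chain of inequalities once one notes that the hypotheses then force $\ux=\ox=1$.)
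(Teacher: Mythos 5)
Your proof is correct, but it takes a genuinely different route from the paper's. The paper factors $a(k)=\frac{p_\beta}{1-p_\beta}b(k)w(k)$ with $w(k)=\frac{L(k+1)-t_J}{L(k)-t_J}$, invokes the monotonicity/sign properties of $w$ (Lemma \ref{la:w}), and then runs two separate boundary computations — Case 1 with $x(\ok)=1$ and Case 2 with $x(\uk)=1$ — each time rewriting the \ref{eq:IC-a} slack via the binding \ref{eq:IC-b} constraint and signing it term by term. You instead use the \emph{affine} identity $a(k)=\frac{p_\alpha}{1-p_\alpha}b(k)+\frac{(p_\alpha-p_\beta)t_J}{1-p_\alpha}\B_\beta(k,n)$ (which I checked; it is valid, including at $k=n+1$ under the stated conventions), so that once \ref{eq:IC-b} binds the \ref{eq:IC-a} slack is a positive constant times $\Lambda^{\ok}(\B_\beta(\cdot,n))-\Lambda^{\uk}(\B_\beta(\cdot,n))$, and the lemma reduces to the single inequality $M_L\ge M_R$; this you extract from the binding constraint by the weighted-average comparison $\bar\ell_L\le\ell_{\uk}\le\bar\ell_R$ together with $\ell_{\uk}>0$ (which uses $\uk\ge k_J$), deducing $\bar\ell_L>0$ and then $M_L=M_R\,\bar\ell_R/\bar\ell_L\ge M_R$. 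The edge cases are handled properly: $M_L>0$ since $\uk\ge k_J\ge 1$, the $M_R=0$ case is trivial, and the degenerate interval $\uk=\ok$ is exactly where the hypothesis that a boundary value equals one is needed (it forces $\ux=\ox=1$, without which your telescoping formula would fail). What your route buys is uniformity and economy: no case split between (i) and (ii), no need for the auxiliary monotonicity Lemma \ref{la:w}, and it makes transparent that beyond the degenerate case only $\uk\ge k_J$ and the binding \ref{eq:IC-b} are doing the work; what the paper's route buys is that the multiplicative form $a=\frac{p_\beta}{1-p_\beta}bw$ signs the slack directly from properties of $w$, which is the same machinery reused elsewhere in the appendix. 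I verified your key reduction numerically on a small example and it matches the direct computation of the \ref{eq:IC-a} slack, so I see no gap.
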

\begin{proof}
First note that
\begin{equation*}
    a(k) = \frac{p_\beta}{1-p_\beta}b(k) w(k), \quad k=0,\dots,n+1,
\end{equation*}
where 
\begin{equation*}
    w(k) = \frac{L(k+1)-t_J}{L(k)-t_J}.
\end{equation*}
Recall that $L(k)\ne t_J$ for all integers so that $w(k)$ is well-defined. By Lemma \ref{la:w}, $w(k_J-1)<0$ and $w(k_J)>\dots>w(n+1)>0$. Define $\ux=x(\uk)$, $\ox = x(\ok)$.
\begin{itemize}
    \item[]\textit{Case 1: $\ox = 1$.}
    
    Since \ref{eq:IC-b} binds, it holds that
    \begin{align*}
        0&=\sum_{k=0}^{n+1} x(k)(b(k)-b(k-1))\\
        &=\sum_{k=\uk+1}^{\ok} (b(k)-b(k-1)) + \ux (b(\uk)-b(\uk-1))\\
        &=b(\ok)-(\ux b(\uk-1) + (1-\ux) b(\uk)).
    \end{align*}
    Regarding \ref{eq:IC-a}, we have
    \begin{align*}
        \sum_{k=0}^{n+1} x(k)(a(k)-a(k-1))
        &=a(\ok)-\left[\ux a(\uk-1) + (1-\ux) a(\uk)\right],
    \end{align*}
    which can be written as
    \begin{align}
        &\frac{p_\beta}{1-p_\beta}\left\{b(\ok)w(k)-\left[\ux a(\uk-1)w(\uk-1) + (1-\ux) a(\uk)w(\uk)\right]\right\}\notag{}\\
        =&\frac{p_\beta}{1-p_\beta}w(\ok)\left\{b(\ok)-\left[\ux b(\uk-1)\frac{w(\uk-1)}{w(\ok)} + (1-\ux) b(\uk)\frac{w(\uk)}{w(\ok)}\right]\right\}\notag{}\\
        =&-\frac{p_\beta}{1-p_\beta}w(\ok)\left[\ux b(\uk-1)\left(\frac{w(\uk-1)}{w(\ok)}-1\right) + (1-\ux) b(\uk)\left(\frac{w(\uk)}{w(\ok)}-1\right)\right]\label{eq:IC-aSlack1},
    \end{align}
    where the last equality uses that \ref{eq:IC-b} binds. Since $\ok \ge \uk \ge k_J$ it holds that $w(\ok)>0$. Furthermore, $b(\uk)\ge0$ and $\frac{w(\uk)}{w(\ok)}-1 \ge 0$.
    
    If $\uk=k_J$ then $b(\uk-1)\le 0$ and $\frac{w(\uk-1)}{w(\ok)} < 0$, so that \eqref{eq:IC-aSlack1} is non-positive and \ref{eq:IC-a} is satisfied. If instead $\uk>k_J$ then $b(\uk)\ge 0$ and $\frac{w(\uk-1)}{w(\ok)} > 1$. Then \eqref{eq:IC-aSlack1} is non-positive and IC-a is satisfied, as before.
    
    \item[]\textit{Case 2: $\ux=1$.}
    
    Similar to Case 1, \ref{eq:IC-b} being binding means that 
    \begin{equation*}
        0 = \left[\ox b(\ok) + (1-\ox) b(\ok-1)\right]-b(\uk-1).
    \end{equation*}
    Concerning \ref{eq:IC-a},
    \begin{align*}
        \sum_{k=0}^{n+1} x(k)(a(k)-a(k-1))
        &= \left[\ox a(\ok) + (1-\ox) a(\ok-1)\right]-a(\uk-1),
    \end{align*}
    which can be written as
    \begin{align}
        &\frac{p_\beta}{1-p_\beta}\left\{\left[\ox b(\ok)w(\ok) + (1-\ox) b(\ok-1)w(\ok-1)\right]-b(\uk-1)w(\uk-1)\right\}\notag{}\\
        =&\frac{p_\beta}{1-p_\beta}w(\uk-1)\left\{\left[\ox b(\ok)\frac{w(\ok)}{w(\uk-1)} + (1-\ox) b(\ok-1)\frac{w(\ok-1)}{w(\uk-1)}\right]-b(\uk-1)\right\}\notag{}\\
        =&\frac{p_\beta}{1-p_\beta}w(\uk-1)\left[\ox b(\ok)\left(\frac{w(\ok)}{w(\uk-1)}-1\right) + (1-\ox) b(\ok-1)\left(\frac{w(\ok-1)}{w(\uk-1)}-1\right)\right],\label{eq:IC-aSlack2}
    \end{align}
    where the last equality uses that \ref{eq:IC-b} binds. Since $\ok\ge\uk>k_J$, it holds that $w(\uk-1)>0$, $b(\ok)\ge 0$, $b(\uk-1)\ge 0$, $\frac{w(\ok)}{w(\uk-1)}>1$ and $\frac{w(\ok-1)}{w(\uk-1)}\ge1$. Thus, the last expression in \eqref{eq:IC-aSlack2} is non-negative and \ref{eq:IC-a} is satisfied.
\end{itemize}
\end{proof}

\section{Proof of Theorem \ref{th:non-monotonic}}

Recall from Lemma \ref{la:xJstrictIC} that $x_J$ is strictly IC. Define the following interval mechanisms:
\begin{align*}
    \hat x_J (k)
    =
    \begin{cases}
        0, &k < k_J\\
        \frac{b(k_J)}{b(k_J)-b(k_J-1)} &k=k_J\\
        1, &k>k_J
    \end{cases}
    \quad\mbox{and}\quad
    \check x_J (k)
    =
    \begin{cases}
        0, &k < k_J-1\\
        \frac{a(k_J-1)}{a(k_J-1)-a(k_J-2)}, &k=k_J-1\\
        1, &k>k_J.
    \end{cases}
\end{align*}
$\hat x_J$ is a modified version of $x_J$ where $x_J(k_J)$ has been lowered until \ref{eq:IC-b} binds. In turn, $\check x_J$ is a modified version of $x_J$ where $x_J(k_J-1)$ has been increased until \ref{eq:IC-a} binds.

Lemma \ref{la:hatxJOptimalMonotone} below shows that if there is a conflict of interest and a monotone interval mechanism is optimal then it must be the case that either the zero-mechanism $x\equiv 0$ or $\hat x_J$ is optimal. Lemma \ref{la:hatxJOptimalMonotone} relies on Lemmas \ref{la:hat-xJ+check-xJ} and \ref{la:randomization}. We then show that $\hat x_J$ cannot be an optimal mechanism if $t_P$ is sufficiently large. This implies that a responsive optimal voting mechanism must necessarily be non-monotone when $t_P$ is sufficiently large. 

\begin{lemma}\label{la:hat-xJ+check-xJ}
The mechanisms $\hat x_J$ and $\check x_J$ are both IC. $\hat x_J$ satisfies \ref{eq:IC-a} strictly and \ref{eq:IC-b} at equality. If $k_J=1$ then $\check x_J\equiv 1$ and $\check x_J$ satisfies both \ref{eq:IC-a} and \ref{eq:IC-b} at equality. If $k_J>1$ then $\check x_J$ satisfies \ref{eq:IC-b} strictly at \ref{eq:IC-a} at equality. 
\end{lemma}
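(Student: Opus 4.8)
The plan is to verify all claims directly in the linear-programming reformulation of Appendix~\ref{apx:LP}, using three standing facts: (a) under a conflict of interest $1\le k_J\le n$, so every binomial coefficient $\B_\beta(k,n)$ appearing below is strictly positive, $b(k)$ has the sign of $L(k)-t_J$ (positive iff $k\ge k_J$) and $a(k)$ has the sign of $L(k+1)-t_J$; (b) $\B_\beta(-1,n)=\B_\beta(n+1,n)=0$, hence $a(-1)=a(n+1)=b(-1)=b(n+1)=0$; and (c) the identity $a(k)=\frac{p_\beta}{1-p_\beta}b(k)w(k)$ with $w(k)=\frac{L(k+1)-t_J}{L(k)-t_J}$, together with Lemma~\ref{la:w}, which gives $w(k_J-1)<0<w(n+1)<\dots<w(k_J)$. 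Since $\hat x_J$ and $\check x_J$ each agree with $x_J$ except at a single point, both incentive sums telescope, so the task reduces to evaluating a handful of expressions in $a(k_J-2),a(k_J-1),a(k_J)$ and $b(k_J-2),b(k_J-1),b(k_J)$.

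For $\hat x_J$ I would first note that $\hat x_J(k_J)=b(k_J)/(b(k_J)-b(k_J-1))\in(0,1)$ because $b(k_J)>0>b(k_J-1)$, so $\hat x_J$ is a genuine voting mechanism. Telescoping gives $\sum_k\hat x_J(k)(b(k)-b(k-1))=-b(k_J)+\hat x_J(k_J)(b(k_J)-b(k_J-1))=0$, so \ref{eq:IC-b'} binds; and $\sum_k\hat x_J(k)(a(k)-a(k-1))=-a(k_J)+\hat x_J(k_J)(a(k_J)-a(k_J-1))$. Substituting (c) and using that \ref{eq:IC-b'} binds, this slack collapses to $\frac{p_\beta}{1-p_\beta}\,\hat x_J(k_J)\,b(k_J-1)\,(w(k_J)-w(k_J-1))$, which is strictly negative since $\hat x_J(k_J)>0$, $b(k_J-1)<0$, and $w(k_J)>0>w(k_J-1)$. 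Hence \ref{eq:IC-a'} holds strictly, so $\hat x_J$ is IC with \ref{eq:IC-a'} strict and \ref{eq:IC-b'} tight.

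For $\check x_J$ I would split on $k_J$. If $k_J=1$, then $a(-1)=0$ forces $\check x_J(0)=a(0)/(a(0)-a(-1))=1$, so $\check x_J\equiv1$; both sums then telescope to $a(n+1)-a(-1)=0$ and $b(n+1)-b(-1)=0$, so both constraints are tight and $\check x_J$ is IC. If $k_J>1$, then $k_J-2\ge0$, so $a(k_J-1)>0>a(k_J-2)$ and $\check x_J(k_J-1)=a(k_J-1)/(a(k_J-1)-a(k_J-2))\in(0,1)$. Telescoping the $a$-sum yields $-a(k_J-1)+\check x_J(k_J-1)(a(k_J-1)-a(k_J-2))=0$, so \ref{eq:IC-a'} binds. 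For the $b$-sum I would clear the positive denominator $a(k_J-1)-a(k_J-2)$ and apply (c); the slack then equals a positive multiple of $b(k_J-1)\,b(k_J-2)\,(w(k_J-2)-w(k_J-1))$. Here $b(k_J-1)<0$ and $b(k_J-2)<0$ (as $L(k_J-2)<L(k_J-1)<t_J$), while $w(k_J-2)=(L(k_J-1)-t_J)/(L(k_J-2)-t_J)>0>w(k_J-1)$, so the slack is strictly positive; thus \ref{eq:IC-b'} holds strictly and $\check x_J$ is IC.

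There is no conceptual obstacle here; the only care needed is bookkeeping — confirming that every relevant binomial coefficient is strictly positive (which is precisely where $k_J\le n$, from the conflict of interest, is used), pinning down the signs of $b(k_J-2)$ and $w(k_J-2)$ in the $k_J>1$ case, and fixing the convention that $\check x_J(k)=1$ for every $k\ge k_J$. The one slightly nonobvious step, in each slack computation, is substituting $a=\frac{p_\beta}{1-p_\beta}bw$ and then invoking the binding lower-bound constraint — exactly the manipulation already used in the proof of Lemma~\ref{la:interval-IC-a} — which is what turns each slack into a product of factors with transparent signs.
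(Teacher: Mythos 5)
Your proof is correct and takes essentially the same route as the paper's: telescope the two incentive sums of these one-point modifications of $x_J$, note which constraint binds by construction, and read off the sign of the remaining slack. The only difference is that the paper dispenses with the $w$-substitution and binding-constraint step for the strict inequalities, observing directly that the \ref{eq:IC-a'} slack of $\hat x_J$ is $-\bigl((1-\hat x_J(k_J))a(k_J)+\hat x_J(k_J)a(k_J-1)\bigr)<0$ because both $a$-terms are positive, and the \ref{eq:IC-b'} slack of $\check x_J$ is $-\bigl((1-\check x_J(k_J-1))b(k_J-1)+\check x_J(k_J-1)b(k_J-2)\bigr)>0$ because both $b$-terms are negative, so your detour through Lemma~\ref{la:w} is valid but unnecessary.
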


\begin{proof}
Incentive-compatibility of $\hat x_J$:
\begin{align*}
    \sum_{k=0}^{n+1}x(k)(b(k) - b(k-1)) &= \ux (b(k_J)-b(k_J-1) +b(n+1)-b(k_J) = 0,\\
    \sum_{k=0}^{n+1}x(k)(a(k) - a(k-1)) &= \ux (a(k_J)-a(k_J-1) +a(n+1)-a(k_J)\\ &= -((1-\ux) a(k_J)+\ux a(k_J-1)) < 0.
\end{align*}
Incentive-compatibility of $\check x_J$: If $k_J=1$ then $a(k_J-2)=0$ so that $\check x_J \equiv 1$ and there is nothing to show. If $k_J>1$ then with $\ux =  \frac{a(k_J-1)}{a(k_J-1)-a(k_J-2)} \in (0,1)$:
\begin{align*}
    \sum_{k=0}^{n+1}\check x_J(k)(b(k) - b(k-1))
    &= \ux (b(k_J-1)-b(k_J-2) +b(n+1)-b(k_J-1)\\ 
    &=-((1-\ux) b(k_J-1)+\ux b(k_J-2))
    > 0,\\
    \sum_{k=0}^{n+1}x(k)(a(k) - a(k-1)) &= \ux (a(k_J)-a(k_J-1) +a(n+1)-a(k_J) = 0.
\end{align*}
\end{proof}

\begin{lemma}\label{la:randomization}
Let $x$ be an IC responsive interval mechanism with $x(k)=1$ for some $k$. If $x$ is monotone then $\uk \in \{k_J-1, k_J\}$. If $\uk = k_J-1$ then $x$ is a randomization over $\check x_J$ and $x_J$; if $\uk = k_J$ then $x$ is a randomization over $\hat x_J$ and $x_J$. 
\end{lemma}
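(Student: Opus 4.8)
The plan is to exploit that a monotone interval mechanism with $x(k)=1$ somewhere has an essentially unique shape. Since $x$ is non-decreasing and attains the value $1$, it equals $1$ at $n+1$, so $n+1$ lies in the support $\{\uk,\dots,\ok\}$ and hence $\ok=n+1$. Writing $\ux=x(\uk)\in(0,1]$, the interval form then gives $x(k)=0$ for $k<\uk$, $x(\uk)=\ux$, and $x(k)=1$ for $\uk<k\le n+1$, so $x$ is pinned down by the pair $(\uk,\ux)$. The lemma thus reduces to (a) showing $\uk\in\{k_J-1,k_J\}$ and (b) checking that $\ux$ falls in exactly the range for which $x$ is a convex combination of $x_J$ with $\hat x_J$ (resp.\ $\check x_J$).

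For (a) I would pass to the linear-program form of incentive compatibility (\ref{eq:IC-a'} and \ref{eq:IC-b'} from Appendix \ref{apx:LP}) and telescope: for a mechanism of the above shape the two constraint expressions collapse to $-[(1-\ux)a(\uk)+\ux a(\uk-1)]\le 0$ and $-[(1-\ux)b(\uk)+\ux b(\uk-1)]\ge 0$, using $a(n+1)=b(n+1)=0$. I would then record that, because $a(k)$ has the sign of $L(k+1)-t_J$ and $b(k)$ that of $L(k)-t_J$ with $L$ strictly increasing and $L^{-1}(t_J)\notin\mathbb{Z}$, one has $b(k)<0\iff k<k_J$ and $a(k)<0\iff k<k_J-1$ (with zeros only at the boundary indices $-1,n+1$). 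If $\uk\le k_J-2$ the $a$-bracket is strictly negative, so the $a$-expression is positive and \ref{eq:IC-a'} fails; hence $\uk\ge k_J-1$. If $\uk\ge k_J+1$ the $b$-bracket is strictly positive (here $\ux>0$ matters), so the $b$-expression is negative and \ref{eq:IC-b'} fails; hence $\uk\le k_J$.

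For (b), with $\uk$ identified I would solve the relevant incentive constraint for $\ux$. When $\uk=k_J$, \ref{eq:IC-b'} reads $(1-\ux)b(k_J)+\ux b(k_J-1)\le 0$, which — since $b(k_J)>0>b(k_J-1)$ — is equivalent to $\ux\ge b(k_J)/(b(k_J)-b(k_J-1))=\hat x_J(k_J)$; writing $\ux$ as the corresponding convex combination of $\hat x_J(k_J)$ and $x_J(k_J)=1$ gives weights in $[0,1]$, and those weights reproduce $x$ at every other index because $\hat x_J$ and $x_J$ agree off $k_J$. The case $\uk=k_J-1$ is the mirror image via \ref{eq:IC-a'}: it gives $\ux\le a(k_J-1)/(a(k_J-1)-a(k_J-2))=\check x_J(k_J-1)$, so $x$ is the convex combination of $\check x_J$ and $x_J$ with weight $\ux/\check x_J(k_J-1)$ on $\check x_J$.

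I expect the main obstacle to be boundary bookkeeping rather than anything deep: the sign statements for $a$ and $b$ must be made strict (this is where Assumption \ref{as:noindiff} enters), and the degenerate corners — $\uk=0$, $\uk=n+1$, and $k_J=1$ (where $a(k_J-2)=a(-1)=0$ and $\check x_J\equiv 1$) — have to be disposed of in step (a), where the borderline equality is killed precisely by responsiveness, since otherwise it would force $x\equiv 1$. Once those corners are handled, the remainder is routine telescoping and solving a linear inequality in $\ux$.
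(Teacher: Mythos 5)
Your proposal is correct and takes essentially the same route as the paper's own proof: the same reduction of a monotone interval mechanism with a $1$-entry to the pair $(\uk,\ux)$, the same telescoped one-bracket forms of the two incentive constraints using $a(n+1)=b(n+1)=0$, and the same sign analysis of $a(\cdot)$ and $b(\cdot)$ to force $\uk\in\{k_J-1,k_J\}$ (with responsiveness ruling out the $\uk=0$, $\ux=1$ corner, exactly as in the paper). Your step (b) merely rephrases the paper's contradiction argument as directly solving the binding constraint for the admissible range of $\ux$ (namely $\ux\ge\hat x_J(k_J)$, respectively $\ux\le\check x_J(k_J-1)$), which yields the stated randomizations in the same way.
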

\begin{proof}
Let $x$ satisfy the assumptions of the lemma. Since $x(k)=1$ for some $k$ and $x$ is monotone it holds that $x(k)=1$ for all $k>\uk$. Let $\ux = x(\uk)$. 

We now show that $\uk \in \{k_J-1, k_J\}$. Suppose by contradiction that $\uk \le k_J-2$ (in particular, $k_J\ge 2$). Then 
\begin{align*}
    \sum_{k=0}^{n+1}x(k)(a(k)-a(k-1))
    &= -((1-\ux) a(\uk)+\ux a(\uk-1)) > 0,
\end{align*}
where the strict inequality follows directly if $k_J> 2$ (since then $a(\uk), a(\uk-1) <0$). In case $k_J =2$, responsiveness of $x$ implies $\ux <1$ and hence again the strict inequality (since $a(\uk)<0$). Thus, in either case, $x$ violates \ref{eq:IC-a}, contradiction.

Now suppose instead, suppose that $\uk \ge k_J+1$. Then 
\begin{align*}
    \sum_{k=0}^{n+1}x(k)(b(k)-b(k-1))
    &= -((1-\ux) b(\uk)+\ux b(\uk-1)) < 0,
\end{align*}
so \ref{eq:IC-b} is violated (because $b(\uk)\ge 0, b(\uk-1)>0$), contradiction.

We have shown that $\uk \in \{k_J-1, k_J\}$. To finish the proof, it suffices to show that $x(k_J-1) \le \check x(k_J-1)$ when $\uk = k_J-1$ and $x(k_J) \ge \hat x_J(k_J)$ when $\uk = k_J$.

\textit{Case: $\uk = k_J -1$.} If $k_J=1$ then $\check x(k_J-1)=1$ and there is nothing to show. Hence suppose $k_J\ge 2$. If $x(k_J-1) > \check x(k_J-1)$ then
\begin{align*}
    \sum_{k=0}^{n+1}x(k)(a(k) - a(k-1))  
    &= -((1-x(k_J-1)) a(\uk)+x(k_J-1)a(\uk-1))\\
    &>-((1-\check x(k_J-1)) a(\uk)+\check x(k_J-1)a(\uk-1)) = 0,
\end{align*}
$a(\uk)>0>a(\uk-1)$, which holds since $\uk=k_J-1$ and $k_J\ge 2$. So \ref{eq:IC-a} is violated, contradiction.

\textit{Case: $\uk=k_J$.} If $x(k_J) < \hat x_J(k_J)$ then 
\begin{align*}
    \sum_{k=0}^{n+1}x(k)(b(k) - b(k-1))  
    &= -((1-x(k_J)) b(\uk)+x(k_J)b(\uk-1))\\
    &<-((1-\check x(k_J)) b(\uk)+\check x(k_J)b(\uk-1)) = 0,
\end{align*}
since $b(\uk)>0>b(\uk-1)$. Hence \ref{eq:IC-b} is violated, contradiction.
\end{proof}

\begin{lemma}\label{la:hatxJOptimalMonotone}
Assume there is a conflict of interest. If a monotone interval mechanism is optimal then either $\hat x_J$ or $x\equiv 0$ is optimal.
\end{lemma}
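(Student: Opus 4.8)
The plan is to use Lemma~\ref{la:randomization}, which identifies the shape of responsive monotone interval mechanisms that attain the value $1$ somewhere, together with the linearity of the principal's objective $\sum_k v(k)x(k)$ (Appendix~\ref{apx:LP}) and two elementary payoff comparisons. First I would record the comparisons. Under a conflict of interest $k_J<k_P$, so $L(k_J-1)<L(k_J)\le L(k_P-1)<t_P$ and hence $v(k_J-1),v(k_J)<0$. Since $\hat x_J$ and $x_J$ agree except that $\hat x_J(k_J)<1=x_J(k_J)$, this yields $\sum_k v(k)\hat x_J(k)>\sum_k v(k)x_J(k)$; since $\check x_J$ and $x_J$ agree except that $\check x_J(k_J-1)>0=x_J(k_J-1)$, it yields $\sum_k v(k)x_J(k)>\sum_k v(k)\check x_J(k)$. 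By Lemma~\ref{la:hat-xJ+check-xJ} all three mechanisms are IC; in particular $\hat x_J$ is feasible for the principal.

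Next, let $x^*$ be an optimal monotone interval mechanism. If $x^*$ is not responsive it is a constant interval mechanism, hence either $x^*\equiv 0$ --- and we are done --- or $x^*\equiv 1$; but $x^*\equiv 1$ and $\hat x_J$ differ exactly on $\{0,\dots,k_J\}$, where $\hat x_J$ is strictly smaller and every $v(k)<0$, so $\hat x_J$ strictly beats $x^*\equiv 1$, contradicting optimality. Hence $x^*$ is responsive (so $x^*\not\equiv 0$), and monotonicity forces the upper cutoff $\ok=n+1$, for otherwise $x^*(\ok)>0=x^*(\ok+1)$. Suppose first that $x^*$ attains the value $1$. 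Then Lemma~\ref{la:randomization} gives $\uk\in\{k_J-1,k_J\}$, with $x^*$ a convex combination of $x_J$ and, respectively, $\check x_J$ (when $\uk=k_J-1$) or $\hat x_J$ (when $\uk=k_J$). In the first subcase, linearity together with $\sum_k v(k)\check x_J(k)<\sum_k v(k)x_J(k)<\sum_k v(k)\hat x_J(k)$ shows that the value of $x^*$ is at most that of $x_J$, strictly below that of the feasible mechanism $\hat x_J$ --- contradicting optimality. In the second subcase the value of $x^*$ is a convex combination of those of $x_J$ and $\hat x_J$, hence at most that of $\hat x_J$; since $x^*$ is optimal and $\hat x_J$ is IC, the two values coincide, i.e.\ $\hat x_J$ is optimal.

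It remains to treat a responsive monotone interval mechanism $x^*$ with $\ok=n+1$ that never attains the value $1$. Its support must be $\{n,n+1\}$: the interior is empty (otherwise some interior $x^*(k)=1$), and the support cannot be the single point $\{n+1\}$, since then $x^*(n+1)(b(n+1)-b(n))=-x^*(n+1)b(n)<0$ violates \eqref{eq:IC-b'} because $b(n)>0$. Monotonicity and the standing assumption then give $0<x^*(n)\le x^*(n+1)<1$. I would next show that \eqref{eq:IC-b'} forces $n=k_J$: if $n>k_J$ then $b(n),b(n-1)>0$, and since $b(n+1)=0$ the left side of \eqref{eq:IC-b'} equals $b(n)\bigl(x^*(n)-x^*(n+1)\bigr)-b(n-1)x^*(n)<0$, a contradiction. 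Thus $k_P=n+1$, and \eqref{eq:IC-b'} now reads $x^*(n)\ge \hat x_J(k_J)\,x^*(n+1)$; as $v(k_J)<0$, the value of $x^*$ is therefore at most $x^*(n+1)\bigl(\hat x_J(k_J)v(k_J)+v(k_P)\bigr)=x^*(n+1)\cdot(\text{value of }\hat x_J)$ with $x^*(n+1)<1$. Hence either the value of $\hat x_J$ is positive and $x^*$ is strictly beaten by $\hat x_J$ (contradicting optimality), or it is $\le 0$, in which case $x^*$ has nonpositive value and, being optimal, has value $0$, so $x\equiv 0$ is optimal.

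The main obstacle is the last step: one must verify that the degenerate interval mechanisms lying outside the scope of Lemma~\ref{la:randomization} --- the constant ones and those supported on one or two consecutive points --- cannot outperform $\hat x_J$, and that whenever such a mechanism ties $\hat x_J$ the common optimal value is $0$, so that $x\equiv 0$ is then also optimal. The substantive content --- that, for the principal, $\hat x_J$ weakly dominates $x_J$, which in turn weakly dominates $\check x_J$ --- follows at once from $v(k_J-1),v(k_J)<0$.
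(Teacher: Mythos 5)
Your proof is correct, and its core is the same as the paper's: Lemma \ref{la:randomization} reduces responsive monotone interval mechanisms that attain the value $1$ to randomizations over $\check x_J,x_J$ (if $\uk=k_J-1$) or $\hat x_J,x_J$ (if $\uk=k_J$), and the sign facts $v(k_J-1)<0$ and $v(k_J)<0$ under a conflict of interest give that the principal's payoff from $\check x_J$ is strictly below that from $x_J$, which is strictly below that from $\hat x_J$, so only $\hat x_J$ can survive as an optimum in this class. Where you genuinely differ is in disposing of optimal mechanisms whose entries are all strictly below $1$: the paper supposes $x\equiv 0$ is not optimal, infers the optimal value is positive, and then asserts that an optimal mechanism must have some $x(k)=1$ --- implicitly a scaling argument (the constraints \ref{eq:IC-a'} and \ref{eq:IC-b'} are homogeneous, so a feasible $x$ with positive value and all entries below $1$ could be scaled up, contradicting optimality). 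You instead classify such mechanisms explicitly: the implementation interval has empty interior, \ref{eq:IC-b'} rules out support $\{n+1\}$ and then forces support $\{n,n+1\}$ with $n=k_J$ and $k_P=n+1$, after which the binding direction of \ref{eq:IC-b'} bounds the value by $x^*(n+1)$ times the value of $\hat x_J$, yielding either a contradiction with optimality or optimality of $x\equiv 0$. Your route costs some extra casework on degenerate supports but is fully explicit and avoids relying on the paper's unstated scaling step; your handling of the constant mechanisms ($\equiv 0$, $\equiv 1$) and of the value comparisons is also correct.
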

\begin{proof}
Let there be a conflict of interest and suppose a monotone interval mechanism $x$ is optimal. Suppose the zero-mechanism $y\equiv 0$ is not optimal. Since $x_J$ yields a strictly higher expected payoff to the principal than $y\equiv 1$, $x$ has to be responsive. $x$ is IC, responsive, and monotone. Furthermore, the principal's expected payoff from $x$ must be positive, since $y\equiv 0$ is not an optimal mechanism. Therefore, since $x$ is optimal, there must exist $k$ such that $x(k)=1$. Let $\uk = \min\{k\in\{0,\dots,n+1\}:x(k)>0\}$. By Lemma \ref{la:randomization} and optimality of $x$, it holds that (i) $\uk=k_J$ (otherwise $x_J$ would yield a strictly higher expected payoff, because $v(k_J-1)<0$). Hence, again by Lemma \ref{la:randomization}, $x$ is a randomization over $x_J$ and $\hat x_J$. But $\hat x_J$ yields a strictly higher expected payoff than $x_J$ because $v(k_J)<0$ (conflict of interest). Thus, it must hold that $x=\hat x_J$.
\end{proof}

Now, to finish the proof of Theorem \ref{th:non-monotonic}, assume that there exists a conflict of interest and that the zero-mechanism is not optimal. By Lemma \ref{la:hat-xJ+check-xJ} the mechanism $\hat x_J$ satisfies \ref{eq:IC-b} at equality and \ref{eq:IC-a} strictly. By Lemma \ref{la:hatxJOptimalMonotone}, in order to show that an optimal responsive monotone interval mechanism must be non-monotone, it suffices to show that $\hat x_J$ is not an optimal mechanism. We will construct a deviation vector $\Delta$ such that $\hat x_J + \Delta$ yields a strictly higher payoff to the principal than $\hat x_J$, provided that $t_P$ is large enough.

Let $\delta > 0$ and define $\Delta \in \mathbb{R}^{n+2}$ by
\begin{align*}
    \Delta(k) = 
    \begin{cases}
        \frac{-\delta}{b(k)-b(k-1)}, &k=k_J\\
        \frac{\delta}{b(n+1)-b(n)}, &k=n+1\\
        0, &\mbox{otherwise}.
    \end{cases}
\end{align*}
Note that the above is well-defined because $k_J < k_P \le n+1$, in particular $k_J \ne n+1$. By construction of $\Delta$, 
\begin{equation*}
    \sum_{k=0}^{n+1}(\hat x(k)+\Delta(k))(b(k)-b(k-1)) = 0,
\end{equation*}
so \ref{eq:IC-b} holds at equality. Choose $\delta>0$ sufficiently small such that small such that $y = \hat x_J + \Delta$ is a feasible mechanism that satisfies \ref{eq:IC-a} strictly (this is possible because $\hat x_J$ satisfies \ref{eq:IC-a} strictly, see Lemma \ref{la:hat-xJ+check-xJ}). Then $y$ is an incentive-compatible mechanism. 

Using $y$ leads to a strictly higher expected payoff to the principal than using $\hat x_J$ if and only if 
\begin{align*}
    0
    &<\sum_{k=0}^{n+1}\Delta(k)v(k)\\
    &= \Delta(k_J)v(k_J) + \Delta(n+1)v(n+1)\\
    &= \delta\left(\frac{-v(k_J)}{b(k_J)-b(k_J-1)} + \frac{v(n+1)}{b(n+1)-b(n)}\right)\\
    &= -\delta\left(\frac{1}{\frac{b(k_J)}{v(k_J)}-\frac{b(k_J-1)}{v(k_J)}} + \frac{1}{\frac{b(n)}{v(n+1)}}\right)\\
    \Leftrightarrow
    0&> \frac{1}{\frac{b(k_J)}{v(k_J)}-\frac{b(k_J-1)}{v(k_J)}} + \frac{1}{\frac{b(n)}{v(n+1)}}.
\end{align*}
Note that $\frac{b(k_J)}{v(k_J)} < 0$, $\frac{- b(k_J-1)}{v(k_J)} < 0$ and $\frac{b(n)}{v(n+1)} > 0$. Hence the above condition rearranges to
\begin{align}\label{eq:*}\tag{$*$}
    0 
    &< \frac{b(n)}{v(n+1)} + \frac{b(k_J)}{v(k_J)}-\frac{b(k_J-1)}{v(k_J)}.
\end{align}
It holds that,
\begin{align*}
    (1-p_\beta)\frac{\B_\beta(n,n)}{\B_\beta(n+1,n+1)} &= (1-p_\beta)\frac{p_\beta^n}{p_\beta^{n+1}} = \frac{1-p_\beta}{p_\beta},\\
    (1-p_\beta)\frac{\B_\beta(k_J,n)}{\B_\beta(k_J,n+1)}
    &= \left(1-\frac{k_J}{n+1}\right)(1-p_\beta)\frac{{p_\beta}^k_J (1-p_\beta)^{n-k_J}}{{p_\beta}^k_J (1-p_\beta)^{n+1-k_J}} = \left(1-\frac{k_J}{n+1}\right),\\
    (1-p_\beta)\frac{\B_\beta(k_J-1,n)}{\B_\beta(k_J,n+1)}
    &= \frac{k_J}{n+1}(1-p_\beta)\frac{{p_\beta}^{k_J-1} (1-p_\beta)^{n+1-k_J}}{{p_\beta}^k_J (1-p_\beta)^{n+1-k_J}} = \frac{k_J}{n+1}\frac{1-p_\beta}{p_\beta}.
\end{align*}
Using the above and the definitions of $b$ and $v$, condition \eqref{eq:*} can be written as
\begin{align*}
    0 
    &< \frac{1-p_\beta}{p_\beta}\frac{L(n)-t_J}{L(n+1)-t_P} 
    + \left(1-\frac{k_J}{n+1}\right)\frac{L(k_J)-t_J}{L(k_J)-t_P}
    -\frac{k_J}{n+1}\frac{1-p_\beta}{p_\beta}\frac{L(k_J-1)-t_J}{L(k_J)-t_P}.
\end{align*}
Since there is a conflict of interest it holds that $L(k_J)<t_P$.
Multiplying by $L(k_J)-t_P <0$ and dividing by $\frac{1-p_\beta}{p_\beta}(L(n)-t_J)>0$ yields
\begin{align*}
    0 
    &> \frac{L(k_J)-t_P}{L(n+1)-t_P} 
    + \frac{p_\beta}{1-p_\beta}\left(1-\frac{k_J}{n+1}\right)\frac{L(k_J)-t_J}{L(n)-t_J}
    -\frac{k_J}{n+1}\frac{L(k_J-1)-t_J}{L(n)-t_J}\\
    &= \frac{L(k_J)-t_P}{L(n+1)-t_P} 
    + c',
\end{align*}
where
\begin{equation*}
    c' = \frac{p_\beta}{1-p_\beta}\left(1-\frac{k_J}{n+1}\right)\frac{L(k_J)-t_J}{L(n)-t_J}
    -\frac{k_J}{n+1}\frac{L(k_J-1)-t_J}{L(n)-t_J} >0.
\end{equation*}
But
\begin{align*}
    \frac{L(k_J)-t_P}{L(n+1)-t_P} 
    = \frac{L(k_J)-L(n+1)}{L(n+1)-t_P}+1.
\end{align*}
Using $t_P < L(n+1)$, the condition can be simplified to
\begin{equation*}
    t_P > L(n+1)-\frac{L(n+1)-L(k_J)}{1+c'} =: \bar t_P.
\end{equation*}
\qed

\section{Technical Results}
\begin{lemma}\label{la:extreme}
If $x \in \mathbb{R}^{n+2}$ is an extreme point of the polytope
\begin{align*}
    \mathcal{X}_b = \left\{x \in \mathbb{R}^{n+2}: 0\le x(k) \le 1\quad \forall k,\quad \sum_{k=0}^{n+1}x(k)(b(k)-b(k-1))\ge 0\right\} \subset \mathbb{R}^{n+2}.
\end{align*}
then there exists at most one $k' \in \{0,\dots,n+1\}$ such that $0<x(k')<1$.
\end{lemma}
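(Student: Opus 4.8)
The statement is the standard fact that an extreme point of a polytope cut out by box constraints together with a single additional linear inequality can have at most one ``fractional'' coordinate. I would prove it by contraposition: if $x$ has two distinct coordinates lying strictly inside $(0,1)$, then $x$ is a nontrivial midpoint of two other points of $\mathcal{X}_b$, so it is not an extreme point.

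\textbf{Key steps.} Write $c(k) = b(k)-b(k-1)$ for $k=0,\dots,n+1$, so the last defining inequality of $\mathcal{X}_b$ is $\sum_{k} c(k)x(k)\ge 0$. Suppose, for contradiction, that $x$ is an extreme point of $\mathcal{X}_b$ and that there are indices $k_1\ne k_2$ with $x(k_1),x(k_2)\in(0,1)$. First I would construct a nonzero direction $d\in\mathbb{R}^{n+2}$ supported on $\{k_1,k_2\}$ with $\sum_k c(k)d(k)=0$: if $c(k_1)=c(k_2)=0$, take $d(k_1)=1$ and all other entries zero; otherwise take $d(k_1)=c(k_2)$, $d(k_2)=-c(k_1)$, and all other entries zero, so that $\sum_k c(k)d(k)=c(k_1)c(k_2)-c(k_2)c(k_1)=0$ and $d\ne 0$. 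Second, I would observe that for all sufficiently small $\varepsilon>0$ the points $x^{\pm}:=x\pm\varepsilon d$ lie in $\mathcal{X}_b$: the coordinates outside $\{k_1,k_2\}$ are unchanged (hence still in $[0,1]$), the coordinates $k_1,k_2$ are perturbed by at most $\varepsilon\|d\|_\infty$ and therefore remain in $(0,1)$ because $x(k_1),x(k_2)\in(0,1)$, and $\sum_k c(k)x^{\pm}(k)=\sum_k c(k)x(k)\ge 0$ since $c\cdot d=0$. Finally, $x=\tfrac12 x^{+}+\tfrac12 x^{-}$ with $x^{+}\ne x^{-}$, both in $\mathcal{X}_b$, contradicting that $x$ is an extreme point. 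This proves the lemma.

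\textbf{Alternative.} One could instead count active constraints: an extreme point of a polytope in $\mathbb{R}^{n+2}$ must have $n+2$ linearly independent tight constraints; if two coordinates are strictly interior, the only tight constraints are at most $n$ box constraints $x(k)\in\{0,1\}$ (on the remaining coordinates), whose gradients are standard basis vectors, plus possibly $\sum_k c(k)x(k)=0$, giving a spanning set of dimension at most $n+1<n+2$.

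\textbf{Expected obstacle.} There is essentially no real difficulty here; the only thing requiring a moment's care is the degenerate case $c(k_1)=c(k_2)=0$, where the ``rotation'' direction $d(k_1)=c(k_2),\ d(k_2)=-c(k_1)$ would vanish, so one must pick a different (even simpler) $d$ there. Everything else is the routine verification that small perturbations stay feasible.
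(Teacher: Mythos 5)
Your proof is correct and follows essentially the same route as the paper's: both argue that two fractional coordinates would permit a perturbation supported on those coordinates that annihilates the single non-box constraint, exhibiting $x$ as the midpoint of two distinct points of $\mathcal{X}_b$. Your cross-multiplication choice of direction ($d(k_1)=c(k_2)$, $d(k_2)=-c(k_1)$, with the degenerate all-zero case handled separately) is in fact slightly cleaner than the paper's construction, which divides by $b(k)-b(k-1)$ and therefore needs a preliminary observation (via the Hoffman--Kruskal characterization) that these differences cannot vanish at a fractional coordinate of an extreme point.
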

\begin{proof}
By a theorem of Hoffman and Kruskal (see for example Proposition 3.9 in \cite{korte2018combinatorial}), a point in $\mathcal{X}_b$ is an extreme point if and only if it is the unique solution to a subsystem of the system of equations
        \begin{align*}
            x(k) &= 0, \quad k=0,\dots,n+1\\
            x(k) &= 1, \quad k=0,\dots,n+1\\
            \sum_{k=0}^{n+1}x(k)(b(k)-b(k-1))&=0.
        \end{align*}
        Let $x$ be an extreme point of $\mathcal{X}_b$. We claim that $x$ cannot have more than one entry that lies strictly between 0 and 1. We start with two observations. 
        
        First, if $x$ has an entry strictly between 0 and 1, then it is not the unique solution of any subset of the first $2(n+1)$ equations above. Therefore $\sum_{k=0}^{n+1}x(k)(b(k)-b(k-1))=0$ must hold. 
        
        Second, if $k'$ is such that $0<x(k')<1$, then $b(k')-b(k'-1)\ne 0$ must hold: Otherwise, it would be possible to express $x$ as 
        \begin{equation*}
            x = x(k') x' + (1-x(k')) x'',
        \end{equation*}
        where $x'(k')=1$, $x''(k')=0$, and $x'(k)=x''(k)=x(k)$ for all other $k$. This would contradict the assumption that $x$ is an extreme point, because $x', x'' \in \mathcal{X}_b$.
        
        Now, suppose by contradiction that there exist $k'<k''$ such that $0<x(k'), x(k'')<1$. Let $\delta>0$, and define $\Delta\in\mathbb{R}^{n+2}$ by
        \begin{align*}
            \Delta(k) = 
            \begin{cases}
                \frac{\delta}{b(k')-b(k'-1)}, & k=k'\\
                \frac{-\delta}{b(k')-b(k'-1)}, &k=k''\\
                0, &\text{otherwise}.
            \end{cases}
        \end{align*}
        Let $y = x + \Delta$ and $z = x-\Delta$. Then
        \begin{equation*}
            x = \frac{1}{2}y + \frac{1}{2}z.
        \end{equation*}
        But if $\delta>0$ is sufficiently small, then $y, z \in \mathcal{X}_b$, contradicting our assumption that $x$ is an extreme point. Hence, every extreme point of $\mathcal{X}_b$ has at most one entry strictly between 0 and 1.
\end{proof}

\begin{lemma}\label{la:exp+hyper}
Let $Q, R, C, D, E \in \mathbb{R}$, where $Q>0, R>1, C>0, D\ne 0$. For $t \in \mathbb{R}$ define
\begin{align*}
    e(t) &= QR^t,\\
    h(t) &= C + \frac{D}{t-E} \quad (t \ne E).
\end{align*}
The exponential function $e(t)$ intersects\footnote{We say that $e(t)$ \textit{intersects $h(t)$ from below} at $\hat t \in \mathbb{R}\setminus\{E\}$ if $e(\hat t) = h(\hat t)$ and there exists $\varepsilon > 0$ such that $e(t)<h(t)$ for all $t \in (\hat t - \varepsilon, \hat t)$ and $e(t)>h(t)$ for all $t \in (\hat t, \hat t+\varepsilon)$. An \textit{intersection from above} is defined analogously. We say that $e(t)$ \textit{intersects $h(t)$} at $\hat t \in \mathbb{R}\setminus\{E\}$ (without qualifiers) if it does so either from below or from above.} the hyperbola $h(t)$ at most twice on $\mathbb{R}-\{E\}$. Furthermore, if $D>0$ then $e(t)$ intersects $h(t)$ exactly once on $(-\infty,E)$ (from below) and exactly once on $(E,\infty)$ (from below). If $D<0$ then $e(t)$ intersects $h(t)$ on at most one of the intervals $(-\infty,E)$ and $(E,\infty)$. If $e(t)$ intersects $h(t)$ on $(-\infty,E)$ then it does so twice, first from below and then from above. If $e(t)$ intersects $h(t)$ on $(E,\infty)$ then it does so twice, first from above and then from below.
\end{lemma}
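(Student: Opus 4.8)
The plan is to work with the difference $g(t) = e(t)-h(t) = QR^t - C - \frac{D}{t-E}$ on $\mathbb{R}\setminus\{E\}$, whose sign changes (with orientation) are precisely the claimed intersections. The two intervals $(-\infty,E)$ and $(E,\infty)$ are handled separately, and the sign of $D$ controls the monotonicity/convexity of $g$. For the global ``at most twice'' bound it is cleaner to use the polynomial$\times$exponential reformulation: for $t\ne E$ one has $g(t)=0$ iff $F(t):=(e(t)-C)(t-E)=D$, and since $D\ne 0$ while $F(E)=0$, the level set $\{F=D\}$ avoids $t=E$; hence it suffices to bound the number of roots of $F(t)=D$ on all of $\mathbb{R}$.

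The first step is to show $F$ is \emph{valley-shaped}. A direct computation gives $F''(t)=QR^t\ln R\,[(\ln R)(t-E)+2]$, so $F$ has a single inflection point $t_0=E-2/\ln R$, is concave on $(-\infty,t_0)$ and convex on $(t_0,\infty)$; moreover $F(t)\to+\infty$ as $t\to\pm\infty$, since $QR^t-C\to-C<0$ while $t-E\to-\infty$ on the left, and both factors $\to+\infty$ on the right. On the concave piece $F'$ must be $\le 0$: if $F'(s)>0$ for some $s<t_0$ then $F'\ge F'(s)>0$ on $(-\infty,s]$ by concavity, so $F$ is increasing there, contradicting $F\to+\infty$ at $-\infty$. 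On the convex piece $F'$ is strictly increasing, tends to $+\infty$, and starts from $F'(t_0)\le 0$, so it changes sign at most once, at some $t_1\ge t_0$. Thus $F$ is non-increasing on $(-\infty,t_1]$ and non-decreasing on $[t_1,\infty)$, so $F(t)=D$ has at most two solutions, which proves the first assertion.

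Next come the directional and interval-distribution claims. For $D>0$, $g'(t)=QR^t\ln R+\frac{D}{(t-E)^2}>0$, so $g$ is strictly increasing on each of the two intervals; the one-sided limits ($g\to -C$ at $-\infty$, $g\to+\infty$ at $E^-$, $g\to-\infty$ at $E^+$, $g\to+\infty$ at $+\infty$) then yield exactly one zero on each interval, each a crossing from below. For $D<0$ on $(E,\infty)$, $g''(t)=QR^t(\ln R)^2-\frac{2D}{(t-E)^3}>0$, so $g$ is strictly convex; since $g\to+\infty$ at both endpoints $E^+$ and $+\infty$, $g$ has a unique interior minimum, so either no sign change or exactly two, first from above and then from below. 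For $D<0$ on $(-\infty,E)$ the convexity argument fails, so I would reuse $F$: restricted to $(-\infty,E)$ it is non-increasing then possibly non-decreasing, runs from $+\infty$ at $-\infty$ down to $F(E^-)=0$, hence on this interval $F$ is either strictly positive (when its minimiser $t_1\ge E$) or attains a negative minimum (when $t_1<E$); comparing with the level $D<0$ gives either no solution or exactly two, and translating through the identity ``$g>0\iff F<D$'' (valid on $(-\infty,E)$ because $t-E<0$ there) shows the two crossings are first from below, then from above. Finally, ``at most one of the two intervals contains an intersection'' follows because for $D<0$ each interval contributes $0$ or $2$ intersections, and two nonempty intervals would give at least four, contradicting the global bound.

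The main obstacle is the case $D<0$ on $(-\infty,E)$: there $g$ is neither monotone nor convex, and the argument genuinely needs the $F$-reformulation together with the valley-shape structure from the global bound — plus some care with the orientation-reversing equivalence $g>0\iff F<D$ to get the order ``below, then above'' of the two crossings correct.
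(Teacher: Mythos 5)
Your argument is correct, but it reaches the conclusion by a genuinely different route than the paper for the two delicate points. The paper works directly with $e-h$ throughout: for $D>0$ it uses strict monotonicity of $e-h$ on each interval plus the one-sided limits (exactly as you do); for $D<0$ it gets the right interval from strict convexity of $e-h$ (again as you do), but it handles the left interval by taking logarithms and showing that $t\log R-\log\bigl(C+\frac{D}{t-E}\bigr)$ is strictly concave on $(-\infty,E)$, and it proves the ``at most one of the two intervals'' claim by comparing with the horizontal asymptote: for $D<0$ one has $h>C$ on $(-\infty,E)$ and $h<C$ on $(E,\infty)$, so an intersection on the left forces $e>C>h$ on the right. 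Your substitute for both of these steps is the auxiliary function $F(t)=(e(t)-C)(t-E)$, whose single inflection point at $E-2/\ln R$ together with $F(t)\to+\infty$ at $\pm\infty$ makes it valley-shaped; this yields a uniform global bound of two solutions of $e=h$ valid for either sign of $D$, from which the interval-disjointness follows by counting ($0$ or $2$ crossings per interval) and the order ``below, then above'' on $(-\infty,E)$ follows via the orientation-reversing equivalence $g>0\iff F<D$ there. The trade-off: the paper's log-concavity computation is a quicker way through the awkward left-interval case, whereas your $F$-reformulation avoids logarithms and delivers the global ``at most twice'' bound and the disjointness claim in a single stroke. Two small patches are needed but are immediate: (i) ``non-increasing then non-decreasing'' does not by itself exclude a level set containing an interval; note instead that $F'$ is strictly decreasing on the concave piece (hence $F'<0$ there, since $F'\le 0$) and strictly increasing on the convex piece, so $F$ is strictly monotone on each side of $t_1$, which also gives the strict positivity of $F$ on $(-\infty,E)$ when $t_1\ge E$ and the strictly negative minimum when $t_1<E$; (ii) when $D$ equals the minimum of $F$ on the relevant interval there is one solution of $F=D$ rather than ``none'', but it is a tangency and produces no crossing, so the lemma's statement about intersections is unaffected.
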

\begin{proof}$\;$

Since $Q>0$, we can divide by $Q$ if necessary. Hence we can without loss of generality assume that $Q=1$.
\begin{itemize}
    \item []\textit{Case 1: $D>0$.} 
    
    Then, $h(t)$ restricted to either $(-\infty,E)$ or $(E,\infty)$ is strictly decreasing. It follows that $e(t)-h(t)$ is strictly increasing when restricted to either of the intervals. Thus $e(t)-h(t)$ intersects zero at most once in each interval and any intersection must be from below. 
    
    Now note that $e(-\infty)=0<C=h(-\infty)$ and $e(E-)=e(E)>-\infty=h(E-)$. Hence, by continuity, there must be at least one intersection on $(-\infty,E)$. Finally, since $h(E+) = \infty>e(E)=e(E+)$ and $h(+\infty)=C<\infty=e(\infty)$, there has to be at least one intersection on $(E,\infty)$.
    
    \item []\textit{Case 2: $D<0$.}
    \begin{itemize}
        \item []\textit{Step 1: Either all intersections of $e(t)$ with $h(t)$ lie in $(-\infty,E)$ or all lie in $(E,\infty)$.}
    
        The hyperbola $h(t)$ lies strictly above $C$ on $(-\infty,E)$ and strictly below $C$ on $(E,\infty)$. Hence, if $e(t)$ intersects $h(t)$ on $(-\infty,E)$ then $e(t) > C$ on $(E,\infty)$ (since $e(t)$ is strictly increasing) and so there cannot be an additional intersection on $(E,\infty)$.
        
        \item []\textit{Step 2: $e(t)$ intersects $h(t)$ at most twice on $(E,\infty)$. If there are two intersections then the first one is from above and the second one from below.}
        
        On $(E,\infty)$, $h(t)$ is strictly concave, hence $e(t)-h(t)$ is strictly convex. Therefore, $e(t)-h(t)$ intersects zero at most twice and $e(t)-h(t)<0$ for any $t$ lying strictly between its zeros and $e(t)-h(t)>0$ for any $t$ lying outside the closed interval between its zeros. In particular, if $e(t)-h(t)$ intersects zero twice on $(E,\infty)$ then the first intersection is from above and the second from below.
        
        \item []\textit{Step 3: $e(t)$ intersects $h(t)$ at most twice on $(-\infty,E)$. If there are two intersections then the first one is from below and the second one from above.}
        
        Since $C>0$ it holds that $e(t), h(t) > 0$ for all $t \in (-\infty,E)$ and
        \begin{align*}
        &e(t)\ge h(t)\\
        \Leftrightarrow 
        &\log e(t) \ge \log h(t)\\
        \Leftrightarrow 
        &t \log R - \log \left(C + \frac{D}{t-E}\right) \ge 0.
        \end{align*}
        But the function on the LHS is strictly concave (see below) and therefore crosses zero at most twice. Furthermore, if there are two intersections, then the first is from below and the second from above. 
        
        Strict concavity of $t \log R - \log \left(C + \frac{D}{t-E}\right)$ follows by differentiating twice with respect to $t$, yielding
        \begin{align*}
            -\frac{2 C D(t-E)+D^2}{(E-t)^2 (C (t-E)+D)^2}<0,
        \end{align*}
        where we have used that $D<0$ and $(t-E)<0$. 
    \end{itemize}
\end{itemize}
\end{proof}

\begin{lemma}\label{la:w}
Let $\hat k_J = L^{-1}(t_J) \in \mathbb{R}$. The function $w:\mathbb{R}\setminus\{\hat k_J\}\to \mathbb{R}$,
\begin{equation*}
    w(k) = \frac{L(k+1)-t_J}{L(k)-t_J}
\end{equation*}
is strictly decreasing on $(-\infty, \hat k_J)$ and on $(\hat k_J, \infty)$. Furthermore,
\begin{align*}
    &w(k) > 0, \quad\quad\quad k\in\{0,\dots,n+1\}\setminus\{k_J-1\},\\
    &w(k_J-1)<0.
\end{align*}
\end{lemma}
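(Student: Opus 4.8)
The plan is to reduce everything to the elementary observation that $L(\cdot)$ is a scaled exponential. Concretely, from the definition of $L$ one has $L(k+1) = R\, L(k)$ for all real $k$, where $R = \frac{p_\alpha(1-p_\beta)}{p_\beta(1-p_\alpha)} > 1$ (the inequality because $p_\alpha > p_\beta$), and $L$ is continuous and strictly increasing with $L(\hat k_J) = t_J$. Note also that $t_J = -U(\beta)/U(\alpha) > 0$ since $U(\alpha) > 0 > U(\beta)$.

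For the monotonicity claim, I would substitute $L(k+1) = R\,L(k)$ into the definition of $w$ and simplify:
\[
    w(k) = \frac{R\,L(k) - t_J}{L(k) - t_J} = R + \frac{(R-1)\,t_J}{L(k) - t_J}.
\]
The numerator $(R-1)t_J$ of the second term is strictly positive. On each of the intervals $(-\infty, \hat k_J)$ and $(\hat k_J, \infty)$ the denominator $L(k) - t_J$ has constant sign (negative on the first, positive on the second, since $L(\hat k_J) = t_J$ and $L$ is increasing) and is strictly increasing in $k$; hence $\frac{(R-1)t_J}{L(k)-t_J}$ is strictly decreasing on each interval, and therefore so is $w$.

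For the sign claims, first observe that Assumptions \ref{as:nopartisans} and \ref{as:noindiff} give $L(0) < t_J < L(n+1)$ and $L(k) \ne t_J$ for every integer $k$; combined with strict monotonicity of $L$ this yields $k_J \in \{1,\dots,n+1\}$, $\hat k_J \in (k_J-1,k_J)$, and $L(k) < t_J$ for $k \le k_J-1$ while $L(k) > t_J$ for $k \ge k_J$ (in particular every integer $0,\dots,n+1$ lies in the domain of $w$). Now I would simply read off the sign of $w(k) = \frac{L(k+1)-t_J}{L(k)-t_J}$: for $k \le k_J-2$ both numerator and denominator are negative, so $w(k) > 0$; for $k = k_J-1$ the denominator is negative and the numerator positive, so $w(k_J-1) < 0$; for $k_J \le k \le n+1$ both are positive, so $w(k) > 0$. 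This is precisely the claimed pattern.

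I do not expect a genuine obstacle here; this is a routine technical lemma. The only points requiring care are recording that $t_J > 0$ — this is exactly what makes $\frac{(R-1)t_J}{L(k)-t_J}$ (hence $w$) \emph{decreasing} rather than increasing on each interval — and invoking Assumption \ref{as:noindiff} so that the singularity $\hat k_J$ is not an integer and $w$ is well-defined at all of $0,\dots,n+1$. As an alternative to the algebraic identity above, monotonicity could be obtained by the change of variables $y = L(k)$ (which is strictly increasing) and checking that $g(y) = \frac{Ry - t_J}{y-t_J}$ has derivative $\frac{-(R-1)t_J}{(y-t_J)^2} < 0$; this is also the form in which the statement feeds into the exponential-versus-hyperbola argument of Lemma \ref{la:exp+hyper}.
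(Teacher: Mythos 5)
Your proposal is correct and, in substance, very close to the paper's own argument: the sign analysis (locate $\hat k_J$ strictly between $k_J-1$ and $k_J$ via Assumptions \ref{as:nopartisans} and \ref{as:noindiff}, then read off the signs of numerator and denominator of $w$) is identical. The only difference is in the monotonicity step: the paper differentiates $w$ directly, using $L'(k)=\rho L(k)$ with $\rho=\log\bigl(\tfrac{p_\alpha(1-p_\beta)}{p_\beta(1-p_\alpha)}\bigr)>0$, and finds $w'(k)=\tfrac{\rho\, t_J(L(k)-L(k+1))}{(L(k)-t_J)^2}<0$, whereas you use the equivalent multiplicative identity $L(k+1)=R\,L(k)$ to rewrite $w(k)=R+\tfrac{(R-1)t_J}{L(k)-t_J}$ and conclude by monotonicity of $y\mapsto 1/y$ on each sign-constant piece. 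Both routes rest on exactly the same two facts — $L$ is a scaled exponential (hence strictly increasing with constant ratio $R>1$) and $t_J>0$ — so this is a minor, derivative-free variant rather than a genuinely different proof; if anything, your decomposition makes explicit the role of $t_J>0$, which in the paper's computation enters only implicitly in the final inequality $t_J(L(k)-L(k+1))<0$, and it connects more transparently to the hyperbola structure exploited in Lemma \ref{la:exp+hyper}.
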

\begin{proof}
For $k \in \mathbb{R}\setminus\{k_J\}$, let $w'(k)$ be the derivative of $w(k)$. Then
\begin{align*}
    w'(k)
    &=\frac{L'(k+1)(L(k)-t_J)-L'(k)(L(k+1)-t_J)}{(L(k)-t_J)^2}\\
    &= \frac{\rho}{(L(k)-t_J)^2} [L(k+1)(L(k)-t_J)-L(k)(L(k+1)-t_J)]\\
    &=\frac{\rho}{(L(k)-t_J)^2} [L(k)t_J-L(k+1)t_J]< 0,
\end{align*}
using that $    L'(k) =\rho L(K)$ with $\rho=\log\left(\frac{p_\alpha(1-p_\beta)}{p_\beta(1-p_\alpha)}\right)>0.$ Hence $w(k)$ is strictly decreasing on $(-\infty, k_J)$ and on $(k_J, \infty)$.

Since $L(k)$ is strictly increasing on $\mathbb{R}$, it holds that $L(k)<t_J$ for $k < \hat k_J$ and $L(k) > t_J$ for $k > \hat k_J$. Thus 
\begin{align*}
    w(k) 
    = \frac{L(k+1) - t_J}{L(k)-t_J}
    \quad
    \begin{cases}
        > 0, &k \in (-\infty, \hat k_J - 1) \cup (\hat k_J, \infty)\\
        < 0, &k \in (\hat k_J - 1, k_J).
    \end{cases}
\end{align*}
By Assumption \ref{as:noindiff}, $\hat k_J$ is not an integer. Thus $k_J-2<\hat k_J-1<k_J-1 < \hat k_J < k_J$ and therefore  
\begin{align*}
    w(k)
    \quad
    \begin{cases}
        > 0, &k\in\{0,\dots,n+1\}\setminus\{k_J-1\}\\
        < 0, &k=k_J-1.
    \end{cases}
\end{align*}
\end{proof}

\bibliography{acquit}

\end{document}